\documentclass[11pt]{amsart}

\usepackage[T1]{fontenc}
\usepackage[utf8]{inputenc}
\usepackage{lmodern}
\usepackage{microtype}
\usepackage{amsmath,amssymb,amsthm,mathtools,mathrsfs}
\usepackage{geometry}
\usepackage{hyperref}
\usepackage[nameinlink,noabbrev]{cleveref}
\usepackage{enumitem}

\hypersetup{hidelinks}

\numberwithin{equation}{section}

\theoremstyle{plain}
\newtheorem{theorem}{Theorem}[section]
\newtheorem{proposition}[theorem]{Proposition}
\newtheorem{lemma}[theorem]{Lemma}
\newtheorem{corollary}[theorem]{Corollary}

\theoremstyle{definition}
\newtheorem{definition}[theorem]{Definition}
\newtheorem{remark}[theorem]{Remark}

\newcommand{\cC}{\mathcal C}
\newcommand{\cE}{\mathcal E}
\newcommand{\cH}{\mathcal H}
\newcommand{\cK}{\mathcal K}
\newcommand{\cL}{\mathcal L}
\newcommand{\cN}{\mathcal N}
\newcommand{\cS}{\mathcal S}
\newcommand{\cU}{\mathscr U}
\newcommand{\cW}{\mathcal W}
\newcommand{\Ghigh}{\mathscr G_{>}}
\newcommand{\Gad}{\mathscr G_{\mathrm{ad}}}
\newcommand{\grw}{\operatorname{gr}_{w}}
\newcommand{\Sky}{\operatorname{Sky}}
\newcommand{\Sp}{\operatorname{Sp}}
\newcommand{\PW}{\mathsf{PW}}

\title{Sachs Equations and Plane Waves VI:\\
Soldered Penrose Limits}

\author{Jonathan Holland}
\author{George Sparling}
\address{University of Pittsburgh\\
Department of Mathematics\\
301 Thackeray Hall\\ Pittsburgh, PA 15260}
\date{August 9, 2026}

\begin{document}

\begin{abstract}
Let $(M,g)$ be a Lorentzian manifold whose local space $\cN$ of
unparametrized null geodesics is smooth.  We show that its Penrose limits
assemble into a smooth bundle of plane-wave germs with a canonical
gauge-theoretic soldering to spacetime.  The incidence correspondence
$\cU=\{(x,[\gamma]):x\in\gamma\}$ identifies the contact space modulo the
tangent space to the sky of $x$ with the screen space at $x$, and identifies
the contact line with the weight-two null quotient.  These evaluation maps
solder the associated-graded Penrose model.  A first jet of contact scale fixes
the affine scale and a Reeb representative of the weight-two direction.  Above
the resulting pullback of $J^1\cS$, full
neighborhood realizations form a torsor under the group $\Ghigh$ of
weighted diffeomorphism germs with identity principal part. We also identify the sky-parabolic reduction and its
further transverse-Lagrangian reduction to Rosen gauges.  The reduction fails
precisely where the chosen Lagrangian meets the moving sky, which is the Rosen
caustic locus.  The trace of the Grassmannian Schwarzian of the sky curve
defines a conformally natural projective structure on each null geodesic, and
its trace-free part is the Weyl tidal profile of the Penrose germ.
\end{abstract}

\maketitle

\section{Introduction}
\label{sec:introduction}

The Penrose limit assigns a plane wave to a null geodesic in a Lorentzian
spacetime.  In adapted coordinates $(u,x^1,\ldots,x^n,v)$ it is obtained from
the anisotropic dilation
\begin{equation}\label{eq:intro-dilation}
  \delta_\epsilon(u,x,v)=(u,\epsilon x,\epsilon^2v)
\end{equation}
and the limit of $\epsilon^{-2}\delta_\epsilon^*g$ as
$\epsilon\to0$ \cite{Penrose1976,Guven2000}.  Null Fermi coordinates make the
curvature profile of the resulting plane wave manifestly covariant
\cite{Blau2006}.  What remains less transparent in the coordinate definition is
the sense in which this limit is still attached, or \emph{soldered}, to the
original spacetime.

The null-Fermi expansion of Blau, Frank, and Weiss identifies the leading
profile and its higher corrections with curvature and its covariant derivatives
along the ray \cite{Blau2006}.  Here we ask how the leading germs assemble as
the ray varies and how much of an adapted realization survives the weighted
blow-up.  The weighted associated-graded metric is canonical.  A first jet of
contact scale selects the affine scale and the degree-two Reeb representative.
Full neighborhood realizations with that principal part form a higher-order
gauge orbit.  We use \emph{embedding germ} for such a weighted tubular
realization, which compares the ambient metric with the plane-wave metric as
its leading weighted symbol.

The main result, \Cref{thm:canonical-soldering}, states this globally.  Let
$\cN$ be a local smooth space of
unparametrized null geodesics and let
\[
 \cU=\{(x,[\gamma])\in M\times\cN:x\in\gamma\}
\]
be the incidence correspondence.  Write $\cS\to\cN$ for the bundle of positive
contact scales.  On
\[
 \widehat{\cU}=\cU\times_{\cN}J^1\cS
\]
we construct a bundle of Penrose plane-wave germs.  Its associated-graded
configuration bundle is canonically identified with the weighted normal bundle
of the corresponding null geodesics in $M$.  Full weighted realizations of that
identification form a torsor under $\Ghigh$, the group of admissible
diffeomorphism germs whose weighted principal part is the identity.  In this
way the bundle of Penrose-limit germs is canonically soldered into spacetime.

The construction uses the contact geometry of $\cN$.  At an incident pair
$q=(x,[\gamma])$, the tangent space to the sky of $x$ is a Lagrangian subspace
$\cH_q$ of the contact hyperplane $\cC_{[\gamma]}$.  Evaluation of Jacobi
fields at $x$ gives canonical isomorphisms
\begin{equation}\label{eq:intro-evaluation}
 \cC_{[\gamma]}/\cH_q\cong
 \dot\gamma_x^\perp/\langle\dot\gamma_x\rangle,
 \qquad
 T_{[\gamma]}\cN/\cC_{[\gamma]}\cong
 T_xM/\dot\gamma_x^\perp.
\end{equation}
These two maps are the soldering: the first identifies the weight-one screen
directions and the second identifies the weight-two line.  A contact-scale jet
supplies the scale and Reeb splitting needed to choose representatives of
these quotient maps.

Rosen coordinates are described by Lagrangian complements.  A Lagrangian
$\Lambda\subset\cC_{[\gamma]}$ gives a Rosen gauge only where it is transverse
to the moving sky Lagrangian $\cH_q$.  A Rosen caustic occurs when this
transversality is lost.  The stabilizer of $\cH_q$ is a symplectic parabolic;
its contact lift is the finite-dimensional structure group of the principal
gauge which survives the blow-up.

The sky curve also carries a conformal refinement.  Its Grassmannian
Schwarzian transforms under reparametrization by an inhomogeneous scalar
Schwarzian term.  The normalized trace is therefore a one-dimensional
projective connection along $\gamma$, and the trace-free part is an
endomorphism-valued quadratic differential representing the Weyl tidal
curvature.  Thus the conformal null geometry selects a projective class of
parameters.  Affine normalization remains part of the contact-scale data.
This is proved in \Cref{thm:projective-refinement}, using the projective
differential geometry developed in \cite{HollandSparlingIV} and the conformal
plane-wave transformation law of \cite{HollandSparlingII}.

The transverse Jacobi equation, its Hamiltonian curve in the Lagrangian
Grassmannian, and the equivalence between Brinkmann and Rosen data were developed
in the preceding papers of this series
\cite{HollandSparlingI,HollandSparlingII,HollandSparlingIII,
HollandSparlingIV,HollandSparlingV}.  The
present paper assembles their Penrose germs into a bundle and determines its
soldering and residual gauge.

The paper is organized as follows.  \Cref{sec:weighted-germ} recalls the
transverse Jacobi system and identifies the plane wave with the weighted metric
symbol of the ambient germ.  \Cref{sec:weighted-gauge} separates the surviving
homogeneous gauge from the higher-order realization gauge.
\Cref{sec:contact-incidence} derives the contact and Heisenberg models from the
null cone and proves the incidence evaluation theorem.  The unpolarized,
sky-parabolic, and Rosen bundles are constructed in
\Cref{sec:soldered-bundle}, where the canonical soldering theorem is proved.
\Cref{sec:rosen-legendrian} treats Rosen caustics and the additional geometry
supplied by Legendrian families.  Finally,
\Cref{sec:conformal-naturality} proves the conformally natural projective
refinement of the longitudinal parameter.

\section{The weighted Penrose germ}
\label{sec:weighted-germ}

Let $(M,g)$ have dimension $n+2$ and Lorentz signature
$(-,+,\ldots,+)$, and let $\gamma:I\to M$ be an affinely parametrized null
geodesic with tangent $k=\dot\gamma$.  Along $\gamma$ there is the null flag
\begin{equation}\label{eq:null-flag}
 \langle k\rangle\subset k^\perp\subset TM|_\gamma.
\end{equation}
Its screen and complementary null quotients are
\begin{equation}\label{eq:screen-and-line}
 E=k^\perp/\langle k\rangle,
 \qquad N_\gamma=TM|_\gamma/k^\perp.
\end{equation}
The metric induces a positive-definite screen metric $h$ on $E$ and a
nondegenerate pairing $\langle k\rangle\otimes N_\gamma\to\mathbb R$.  We
assign weight one to $E$ and weight two to $N_\gamma$ and write
\begin{equation}\label{eq:weighted-normal}
 W_\gamma=E[1]\oplus N_\gamma[2]\longrightarrow\gamma.
\end{equation}
The zero section carries the unscaled longitudinal variable $u$.  The intrinsic
dilation on the fibers is
\begin{equation}\label{eq:weighted-dilation}
 \delta_\epsilon(x,v)=(\epsilon x,\epsilon^2v).
\end{equation}

The full associated graded of the null flag is
\begin{equation}\label{eq:full-associated-graded}
 \grw(TM|_\gamma)
 =\langle k\rangle[0]\oplus E[1]\oplus N_\gamma[2].
\end{equation}
It carries no preferred splitting of $TM|_\gamma$.  A screen subbundle
$E^\sharp\subset k^\perp$ and a complementary
null field $\ell$ with $g(k,\ell)=1$ give a realization
\[
 TM|_\gamma=\langle k\rangle\oplus E^\sharp\oplus\langle\ell\rangle,
\]
in which both $E^\sharp$ and $\ell$ depend on choices.  The dilation on
\eqref{eq:weighted-normal} is defined independently of them.

\subsection{Adapted coordinates and the classical scaling}

Penrose-adapted coordinates may be constructed from a local foliation by null
hypersurfaces.  Let $v$ label the hypersurfaces, choose $u$ affine along their
null generators, and propagate $n$ transverse labels $x^i$ along those
generators.  After normalizing the mixed null term, the distinguished geodesic
is $\gamma=\{x=v=0\}$ and the metric has the form below.  The construction
depends on the chosen congruence, while its weighted leading metric is
independent of that choice.

In Penrose-adapted coordinates the metric may be written
\begin{equation}\label{eq:adapted-metric}
 g=2\,du\,dv+A\,dv^2+2B_i\,dv\,dx^i+C_{ij}\,dx^i dx^j,
\end{equation}
where $\gamma=\{x=v=0\}$ and $u$ is affine on $\gamma$.  Direct substitution,
with no singular coefficients, gives
\begin{align}\label{eq:scaled-metric}
 \epsilon^{-2}\delta_\epsilon^*g
  ={}&2\,du\,dv
  +\epsilon^2 A(u,\epsilon x,\epsilon^2v)\,dv^2 \notag\\
  &+2\epsilon B_i(u,\epsilon x,\epsilon^2v)\,dv\,dx^i
  +C_{ij}(u,\epsilon x,\epsilon^2v)\,dx^i dx^j.
\end{align}
Consequently the limit is the Rosen metric
\begin{equation}\label{eq:rosen-limit}
 g_R=2\,du\,dv+C_{ij}(u,0,0)\,dx^i dx^j.
\end{equation}
This presentation includes the choice of the null congruence used to construct
the coordinates.  The isometry class of the plane wave in
\eqref{eq:rosen-limit} is independent of that congruence.

\subsection{Transverse Jacobi geometry}
\label{sec:transverse-jacobi}

The screen bundle $E$ is preserved by the connection induced from
$\nabla_k$.  The curvature endomorphism
\begin{equation}\label{eq:curvature-endomorphism}
 \mathcal R_\gamma:E\longrightarrow E,
 \qquad [X]\longmapsto[R(X,k)k]
\end{equation}
is well defined and self-adjoint.  A transverse Jacobi field is a section of
$E$ satisfying
\begin{equation}\label{eq:jacobi}
 \nabla_k^2J+\mathcal R_\gamma J=0.
\end{equation}
Write $\mathcal J_\gamma^\perp$ for the $2n$-dimensional solution space of
\eqref{eq:jacobi}.  It carries the constant Wronskian form
\begin{equation}\label{eq:wronskian}
 \omega(J_1,J_2)
 =h(\nabla_kJ_1,J_2)-h(J_1,\nabla_kJ_2).
\end{equation}
Constancy follows by differentiating and using the self-adjointness of
$\mathcal R_\gamma$; nondegeneracy follows from the initial-value theorem.
Thus $\mathcal J_\gamma^\perp$ is a symplectic vector space.

For $u_0\in I$, evaluation of Cauchy data gives a symplectic isomorphism
\begin{equation}\label{eq:jacobi-cauchy-data}
 \mathcal J_\gamma^\perp\xrightarrow{\sim}
 E_{u_0}\oplus E_{u_0},
 \qquad J\longmapsto(J(u_0),\nabla_kJ(u_0)).
\end{equation}
The subspace
\begin{equation}\label{eq:vanishing-lagrangian}
 H_\gamma(u_0)=\{J\in\mathcal J_\gamma^\perp:J(u_0)=0\}
\end{equation}
is therefore Lagrangian.  As $u$ varies, these subspaces form the Hamiltonian
curve
\begin{equation}\label{eq:hamiltonian-curve}
 H_\gamma:I\longrightarrow\operatorname{LG}
 (\mathcal J_\gamma^\perp).
\end{equation}
The curvature endomorphism, the symplectic Jacobi system, and this Lagrangian
curve are equivalent descriptions of the transverse plane-wave data
\cite{Arnold1989,Duistermaat2010,HollandSparlingI,HollandSparlingII,
HollandSparlingIV}.

A Rosen presentation is obtained from a fixed Lagrangian
$\Lambda\subset\mathcal J_\gamma^\perp$ on an interval where
$\Lambda\cap H_\gamma(u)=0$.  Evaluation then identifies $\Lambda$ with $E_u$.
If $J_1,\ldots,J_n$ is a basis of $\Lambda$, the matrix
\begin{equation}\label{eq:rosen-jacobi-matrix}
 h_{ij}(u)=h(J_i(u),J_j(u))
\end{equation}
gives the Rosen metric $2\,du\,dv+h_{ij}(u)\,dy^i dy^j$.  The condition that
$\Lambda$ be Lagrangian is the vanishing of the Wronskians and hence the
absence of the unwanted skew cross term.  Loss of transversality to
$H_\gamma(u)$ produces a Rosen caustic.

\subsection{The curvature symbol}

Let $e_1,\ldots,e_n$ be a parallel orthonormal frame of $E$.  With the curvature
convention $R(X,Y)=\nabla_X\nabla_Y-\nabla_Y\nabla_X-\nabla_{[X,Y]}$, set
\begin{equation}\label{eq:profile}
 A_{ij}(u)=-g(R(e_i,k)k,e_j)|_{\gamma(u)}.
\end{equation}
The corresponding Brinkmann metric on $W_\gamma$ is
\begin{equation}\label{eq:brinkmann-limit}
 g_{\PW}=2\,du\,dv+\delta_{ij}\,dx^i dx^j
              +A_{ij}(u)x^ix^j\,du^2.
\end{equation}
Changing the parallel orthonormal frame changes \eqref{eq:brinkmann-limit} by
a constant orthogonal gauge.  Intrinsically, its profile is the negative of
the self-adjoint curvature endomorphism
\eqref{eq:curvature-endomorphism}.

\begin{proposition}[Penrose symbol]\label{prop:penrose-symbol}
The weighted degree-two symbol of the ambient metric germ along $\gamma$ is the
plane-wave germ \eqref{eq:brinkmann-limit}.  It depends only on the null flag,
the affine scale of $k$, the induced screen connection and metric, and the
curvature endomorphism \eqref{eq:curvature-endomorphism}.  In particular it is
independent of a Rosen congruence or a null Fermi frame.
\end{proposition}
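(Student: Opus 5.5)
The plan is to compute the weighted symbol in a specific chart, namely null Fermi coordinates along $\gamma$, and then to prove that the symbol does not depend on that choice. First, the weighted symbol has to be made precise. Take any diffeomorphism germ $\Phi$ from a neighborhood of the zero section of $W_\gamma$ to a neighborhood of $\gamma$ in $M$. Require that $\Phi$ restricts to $\gamma$ on the zero section and that its linearization induces the identity on $\grw(TM|_\gamma)$. The degree-two symbol is then
\[
\lim_{\epsilon\to0}\epsilon^{-2}\delta_\epsilon^*\Phi^*g .
\]
Its existence follows exactly as in \eqref{eq:scaled-metric}. Expand $\Phi^*g$ in the fiber variables by Taylor's theorem, weighting $x$ by one and $v$ by two. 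Only terms of total weight at most two survive the rescaling, and no terms of negative weight occur because $\gamma$ is null and the flag is respected.

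The computation itself will be done in null Fermi coordinates. Build these from the parallel null frame $(k,e_i,\ell)$, with $g(k,\ell)=1$, and the exponential map along $\gamma$. The Fermi expansion \cite{Blau2006} gives
\[
g=2\,du\,dv+\delta_{ij}dx^idx^j-R_{uiuj}x^ix^j\,du^2+O,
\]
where $O$ collects terms with $du\,dx$, $dx\,dx$ and $dv$ that are of weight at least three. I will check the weights term by term:
\begin{itemize}
\item $x^ix^j\,du^2$ has weight two;
\item $v\,du^2$ would have weight two, but its coefficient vanishes, since $\partial_v g_{uu}|_\gamma = 2g(\nabla_\ell k,k)=0$;
\item $x^i\,du\,dx^j$ has weight two, and its coefficient is a Christoffel symbol that vanishes along $\gamma$ because the frame is parallel;
\item $x\,du\,dv$ and $x\,dx\,dx$ have weight at least three, so they drop out.
\end{itemize}
The surviving terms give \eqref{eq:brinkmann-limit} with the profile \eqref{eq:profile}.

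Next comes the dependence statement. The surviving coefficients involve only the following data:
\begin{itemize}
\item the flag, through the split $u$, $x$, $v$;
\item the affine scale, which normalizes $u$;
\item the screen metric and connection, through the parallel orthonormal $e_i$;
\item $\mathcal R_\gamma$, which is well defined on $E$ by \eqref{eq:curvature-endomorphism}.
\end{itemize}
To show independence from the realization, compare two admissible charts $\Phi,\Phi'$. The transition map $\psi=\Phi^{-1}\Phi'$ has identity principal part on $\grw$, so $\delta_\epsilon^{-1}\circ\psi\circ\delta_\epsilon\to\mathrm{id}$ as $\epsilon\to0$. Hence the two rescaled pullbacks have the same limit.

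I expect the main obstacle to be the components of $\psi$ that are only semi-weighted. An example is the change $\ell\mapsto\ell+c^ie_i+\tfrac12|c|^2k$, together with a change of screen section $E^\sharp$; this induces $v\mapsto v-c_ix^i+\dots$, which has weight two and so contributes a first-order term. Writing $\psi$ as the identity on $\grw$ plus corrections of positive weight, and checking that each correction leaves the $\epsilon\to0$ limit unchanged, is the delicate point. For $u$-dependent null rotations I will verify this by direct substitution in Brinkmann form. Such a substitution either produces a cross term $du\,dx$ whose coefficient carries a positive power of $\epsilon$, or is absorbed by the choice of parallel frame. That leaves only the constant orthogonal gauge already noted.

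Finally, the Rosen case follows from the same argument applied to \eqref{eq:adapted-metric}. Its limit \eqref{eq:rosen-limit} is the same symbol, now written in the non-weight-preserving gauge $x=P(u)y$, where $P$ is built from a basis of a Lagrangian $\Lambda$ as in \eqref{eq:rosen-jacobi-matrix}. Since the symbol is intrinsic, this gives independence from the Rosen congruence as well.
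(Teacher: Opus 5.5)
Your null Fermi computation matches the paper's. The independence step, however, fails as stated. You assert that a transition $\psi=\Phi^{-1}\Phi'$ whose linearization induces the identity on $\grw$ satisfies $\delta_\epsilon^{-1}\circ\psi\circ\delta_\epsilon\to\mathrm{id}$. This is false. The weighted principal part \eqref{eq:principal-coordinate-form} contains the homogeneous quadratic shift $v\mapsto v+Q_u(x)$. This shift is invisible to the linearization along $\gamma$, yet it commutes with $\delta_\epsilon$.

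For instance, take $\Phi$ a null Fermi chart and $\psi(u,x,v)=(u,x,v+\tfrac12B_{ij}(u)x^ix^j)$ with $B$ symmetric. Then $d\psi|_\gamma=\mathrm{id}$ and $\delta_\epsilon^{-1}\psi\delta_\epsilon=\psi$ for every $\epsilon$, so the limit in the chart $\Phi\circ\psi$ is
\[
 \psi^*g_{\PW}=g_{\PW}+2B_{ij}(u)x^i\,du\,dx^j+B_{ij}'(u)x^ix^j\,du^2 ,
\]
which differs from $g_{\PW}$ whenever $B\neq0$. So with your definition the symbol is not a single tensor.

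What is true is the equivariance in \Cref{thm:gauge-degeneration}: the limit in the changed chart is $\sigma_w(\psi)^*g_{\PW}$, where $\sigma_w$ retains $Q_u$. The paper's proof rests on exactly this, together with orthogonal conjugation under a change of parallel frame. This is also why the paper later builds the quadratic $v$-term into $\operatorname{Sol}_0$.

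To single out the displayed tensor you need a normalization, for example $g_{uv}=1$, $h=\delta$, and no $du\,dx$ term. A $u$-dependent orthogonal change $x=R(u)y$ produces a skew cross term $2y^TR^TR'\,dy\,du$, while $Q_u$ produces a symmetric one. The normalization therefore forces $Q_u=0$ and $R'=0$, leaving only the constant orthogonal gauge.

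The difficulty you anticipated is in the wrong place. A term $v\mapsto v-c_ix^i$ cannot occur: it would move $\partial_{x^i}$ out of $k^\perp$, which the filtration condition excludes. The null rotation $e_i\mapsto e_i+c_ik$, $\ell\mapsto\ell+c^ie_i+\tfrac12|c|^2k$ (constant $c$, so both frames stay parallel) relates old and new Fermi coordinates by
\[
u=u'+c_ix'^i+\tfrac12|c|^2v'+\cdots,\qquad x=x'+cv'+\cdots,\qquad v=v'+O_w(3).
\]
All of these corrections have positive weight difference and are harmless. The dangerous term is weight-preserving but nonlinear.

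The same point affects your Rosen step. The gauge $x=P(u)y$ is weight-preserving, not non-weight-preserving. The Rosen--Brinkmann transition is the element $x=P(u)y$, $v=v_R-\tfrac12y^TP^TP'y$ of $G_0$, and its quadratic part is precisely a $Q_u$. The phrase ``since the symbol is intrinsic'' must therefore be replaced by this explicit homogeneous gauge together with \Cref{thm:gauge-degeneration}.
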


\begin{proof}
Choose null Fermi coordinates based on $k$, a parallel screen frame, and a
parallel complementary null vector.  The weighted Taylor expansion of the
metric has degree-two part
\[
 2\,du\,dv+\delta_{ij}\,dx^i dx^j
 -g(R(e_i,k)k,e_j)x^ix^j\,du^2;
\]
all remaining terms have strictly larger weighted degree.  This is the standard
covariant expansion of the Penrose limit \cite{Blau2006}.  A change of parallel
screen frame conjugates the profile orthogonally, while a change of full Fermi
realization has the gauge behavior proved in \Cref{sec:weighted-gauge}.  Hence
the displayed tensor is the intrinsic weighted symbol.
\end{proof}

\section{Weighted realizations and higher-order gauge}
\label{sec:weighted-gauge}

The Penrose symbol is canonical.  Its realization in a spacetime neighborhood
requires further choices, recorded below.

\begin{definition}\label{def:weighted-realization}
A \emph{weighted realization} of $W_\gamma$ is a germ along the zero section of
a diffeomorphism
\[
 \Phi:(W_\gamma,\gamma)\longrightarrow(M,\gamma)
\]
which is the identity on $\gamma$ and whose normal derivative respects the null
filtration.  It is an embedding germ of neighborhoods, with no isometry
condition imposed on $g_{\PW}$.
\end{definition}

If $F$ is the transition between two weighted realizations, its weighted
principal part is
\begin{equation}\label{eq:principal-part}
 \sigma_w(F)=\lim_{\epsilon\to0}
       \delta_\epsilon^{-1}\circ F\circ\delta_\epsilon,
\end{equation}
whenever the realizations use the same zero section and affine parameter.  In
local weighted coordinates it has the form
\begin{equation}\label{eq:principal-coordinate-form}
 u'=u,
 \qquad x'=A(u)x,
 \qquad v'=a(u)v+Q_u(x),
\end{equation}
where $A(u)$ is invertible, $a(u)\ne0$, and $Q_u$ is quadratic.  Thus the
frequently used normalization $u'=u$ concerns the principal part.  A general
adapted coordinate change may also contain the higher-weight terms
\begin{equation}\label{eq:adapted-transition}
 u'=u+O_w(1),\qquad
 x'=A(u)x+O_w(2),\qquad
 v'=a(u)v+Q_u(x)+O_w(3).
\end{equation}

An \emph{admissible weighted diffeomorphism germ} is a germ fixing $\gamma$
pointwise and having the form \eqref{eq:adapted-transition} in one, and hence
every, weighted adapted chart.  Equivalently, it preserves each ideal of the
weighted filtration introduced below.  Let $\Gad$ be the group of these germs
and let $G_0$ be its group of principal parts.  Define
\begin{equation}\label{eq:high-gauge}
 \Ghigh=\ker(\sigma_w)
       =\{F\in\Gad:\sigma_w(F)=\operatorname{id}\}.
\end{equation}
There is therefore an exact sequence
\begin{equation}\label{eq:gauge-sequence}
 1\longrightarrow\Ghigh\longrightarrow\Gad
 \xrightarrow{\ \sigma_w\ }G_0\longrightarrow1,
\end{equation}
locally after restricting $G_0$ to the principal parts which occur.  The
pointwise homogeneous structure group is finite-dimensional.  Because its
coefficients may vary with $u$, $G_0$ is the corresponding group of smooth
gauge germs.  It records changes of graded frame and polarization, while
$\Ghigh$ records the noncanonical extension of the same graded soldering to an
actual neighborhood germ.

\begin{lemma}[Weighted functoriality]\label{lem:weighted-functoriality}
Let $T$ be a tensor germ with weighted leading part $\grw(T)$, and let
$F\in\Gad$.  Then
\begin{equation}\label{eq:weighted-functoriality}
 \grw(F^*T)=\sigma_w(F)^*\grw(T).
\end{equation}
\end{lemma}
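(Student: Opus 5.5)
The plan is to define the weighted leading part through the dilation and then commute the limit with pullback.

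\emph{Definition of $\grw$.} For a tensor germ $T$ along $\gamma$ with weighted order $m$ (for the metric, $m=2$), set
\[
 \grw(T)=\lim_{\epsilon\to0}\epsilon^{-m}\delta_\epsilon^*T,
\]
computed in a weighted adapted chart. The weights are $u$ of weight $0$, $x$ of weight $1$, $v$ of weight $2$, and $du,dx,dv$ of weights $0,1,2$. Expanding the coefficients of $T$ in weighted Taylor series, this limit is the lowest homogeneous component. Its existence is the statement that $T$ lies in the $m$-th ideal of the weighted filtration.

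\emph{Main computation.} Write
\[
 F_\epsilon=\delta_\epsilon^{-1}\circ F\circ\delta_\epsilon,
\]
so that $F\circ\delta_\epsilon=\delta_\epsilon\circ F_\epsilon$. Then
\[
 \epsilon^{-m}\delta_\epsilon^*F^*T=F_\epsilon^*\bigl(\epsilon^{-m}\delta_\epsilon^*T\bigr).
\]
By \eqref{eq:principal-part}, $F_\epsilon\to\sigma_w(F)$ as $\epsilon\to0$, and we also have $\epsilon^{-m}\delta_\epsilon^*T\to\grw(T)$. The identity \eqref{eq:weighted-functoriality} follows once we justify passing to the limit in both factors at once.

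\emph{Joint convergence.} This is the step I expect to be the only real obstacle. From \eqref{eq:adapted-transition},
\[
 F_\epsilon(u,x,v)=\bigl(u+\epsilon\,r_0,\;A(u)x+\epsilon\,r_1,\;a(u)v+Q_u(x)+\epsilon\,r_2\bigr),
\]
where the $r_i$ are smooth in $(\epsilon,u,x,v)$ including at $\epsilon=0$. To see this, note that a term of weighted order at least $k+1$ in the $k$-th component acquires at least one extra factor of $\epsilon$ after conjugation. Hence $(\epsilon,p)\mapsto F_\epsilon(p)$ is a smooth family on a neighborhood where it is defined, which after rescaling is any compact subset of $W_\gamma$ for small $\epsilon$. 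The same argument shows that $\epsilon^{-m}\delta_\epsilon^*T$ extends smoothly to $\epsilon=0$. Pullback is continuous in the $C^1$ topology of the map and the $C^0$ topology of the tensor. Therefore the right-hand side converges, locally uniformly with all derivatives, to $\sigma_w(F)^*\grw(T)$.

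\emph{Weight bookkeeping.} Because $F$ preserves the filtration ideals, $F^*T$ has the same weighted order $m$. This matches the normalization $\epsilon^{-m}$ on both sides, so the left-hand limit is indeed $\grw(F^*T)$.

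\emph{Chart independence.} The limits were computed in one adapted chart. Any other adapted chart differs from it by an element of $\Gad$. Applying the identity just proved to that transition shows that the construction is consistent across charts, so \eqref{eq:weighted-functoriality} holds intrinsically.
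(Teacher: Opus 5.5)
Your proof is correct, but it takes a different route from the paper's proof of this lemma. The paper's argument is purely formal. It splits $F=\sigma_w(F)+R$ and $T=T_0+S$ into leading parts and strictly higher-weight remainders in the weighted Taylor expansion. It then observes that every term of $F^*T$ involving $R$ or $S$ has larger weight, so only $\sigma_w(F)^*T_0$ survives on the associated graded. You instead define $\grw$ as the renormalized dilation limit and use the conjugation identity $\epsilon^{-m}\delta_\epsilon^*F^*T=F_\epsilon^*(\epsilon^{-m}\delta_\epsilon^*T)$. This reduces everything to joint smoothness in $(\epsilon,p)$, at $\epsilon=0$, of $F_\epsilon$ and of $\epsilon^{-m}\delta_\epsilon^*T$, which you justify by the weighted Hadamard argument. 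That is exactly the mechanism the paper uses afterwards in the proof of \Cref{thm:gauge-degeneration}.

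What your route buys is that it proves the lemma and the metric clause of that theorem together, and it makes the $C^\infty_{\mathrm{loc}}$ convergence explicit. The paper's route never needs a limit: it lives entirely on the associated graded of the filtration $\mathcal F_\gamma^r$, so it applies equally to formal weighted expansions. Its cost is that the claim that cross terms raise weight is left implicit. That claim needs composition with $F$, and pullback of differentials, to respect the filtration.

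Two small points in your write-up:
\begin{itemize}
\item Continuity of pullback in the $C^1\times C^0$ topologies only gives $C^0$ convergence. The $C^\infty$ statement really comes from evaluating the jointly smooth family $F_\epsilon^*(\epsilon^{-m}\delta_\epsilon^*T)$ at $\epsilon=0$. For tensors with contravariant slots you also need $F_\epsilon^{-1}$ to be jointly smooth, which follows from the invertibility of $\sigma_w(F)$.
\item Your weight-bookkeeping step is redundant. The conjugation identity already shows that $\epsilon^{-m}\delta_\epsilon^*F^*T$ extends smoothly to $\epsilon=0$, so $F^*T$ automatically has weighted order at least $m$.
\end{itemize}
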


\begin{proof}
Write $F=\sigma_w(F)+R$ and $T=T_0+S$, where every component of $R$ and $S$
has strictly larger weight than the corresponding leading component.  Every
term in $F^*T$ involving $R$ or $S$ has larger weight, leaving
$\sigma_w(F)^*T_0$ on the associated graded.
\end{proof}

\begin{theorem}[Gauge degeneration]\label{thm:gauge-degeneration}
For every $F\in\Gad$ the conjugates
\[
 F_\epsilon=\delta_\epsilon^{-1}F\delta_\epsilon
\]
converge in $C^\infty$ on compact subsets to $\sigma_w(F)$.  If $g_\epsilon=
\epsilon^{-2}\delta_\epsilon^*g$ converges to $g_{\PW}$ in one weighted
realization, then in the realization changed by $F$ it converges to
\[
 \sigma_w(F)^*g_{\PW}.
\]
In particular, an element of $\Ghigh$ leaves the limiting representative
unchanged.
\end{theorem}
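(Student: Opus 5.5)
I would work in a fixed weighted adapted chart $(u,x,v)$ and write $F$ in the form \eqref{eq:adapted-transition}. The plan is to read admissibility as a statement about the weighted Taylor expansion of each component. The component $u'$ differs from $u$ by a function vanishing on $\gamma$. The component $x'$ differs from $A(u)x$ by terms of weighted order at least two. The component $v'$ differs from $a(u)v+Q_u(x)$ by terms of weighted order at least three.

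Conjugating by $\delta_\epsilon$ then gives
\[
 u'_\epsilon=u+R_0(u,\epsilon x,\epsilon^2v),\qquad
 x'_\epsilon=A(u)x+\epsilon^{-1}R_1(u,\epsilon x,\epsilon^2 v),
\]
\[
 v'_\epsilon=a(u)v+Q_u(x)+\epsilon^{-2}R_2(u,\epsilon x,\epsilon^2 v).
\]
Here $R_k$ has weighted order at least $k+1$. The key step is the following weighted Hadamard lemma. A smooth germ $R$ of weighted order at least $m$ can be written as a finite sum of weighted monomials $x^\alpha v^\beta$, with $|\alpha|+2\beta=m$, times smooth coefficient functions of $(u,x,v)$. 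This follows from Taylor's theorem with integral remainder applied in $x$ and $v$ separately.

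Substituting this decomposition gives
\[
 \epsilon^{-k}R_k(u,\epsilon x,\epsilon^2 v)=\epsilon^{m-k}\sum x^\alpha v^\beta\, c_{\alpha\beta}(u,\epsilon x,\epsilon^2v),
\]
with $m-k\ge1$. Each coefficient $c_{\alpha\beta}(u,\epsilon x,\epsilon^2 v)$ is smooth jointly in $(\epsilon,u,x,v)$, including at $\epsilon=0$. Hence every derivative in $(u,x,v)$ of the remainder tends to zero uniformly on compact sets. This proves $F_\epsilon\to\sigma_w(F)$ in $C^\infty_{\mathrm{loc}}$, and it also confirms that the limit \eqref{eq:principal-part} exists and equals the principal part. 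I expect this step to be the main technical point. The difficulty is ensuring uniformity: the rescaled germ must be defined on larger and larger sets. So I would fix a compact set $K$ and take $\epsilon$ small enough that $\delta_\epsilon K$ lies in the domain of $F$.

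For the metric statement, write $g'=F^*g$ for the metric in the new realization. Since $\delta_\epsilon$ is linear on fibers,
\[
 \epsilon^{-2}\delta_\epsilon^*F^*g=F_\epsilon^*\bigl(\epsilon^{-2}\delta_\epsilon^*g\bigr)=F_\epsilon^*g_\epsilon.
\]
By hypothesis $g_\epsilon\to g_{\PW}$ in $C^\infty_{\mathrm{loc}}$. By the first step $F_\epsilon\to\sigma_w(F)$ in $C^\infty_{\mathrm{loc}}$, and $F_\epsilon$ maps a fixed compact set into a fixed compact set for small $\epsilon$. The pullback $(\Phi,T)\mapsto\Phi^*T$ is continuous in the $C^\infty_{\mathrm{loc}}$ topologies, since its coefficients are polynomial in the first derivatives of $\Phi$ composed with $T$. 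Therefore $F_\epsilon^*g_\epsilon\to\sigma_w(F)^*g_{\PW}$. This is consistent with \Cref{lem:weighted-functoriality}.

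Finally, if $F\in\Ghigh$ then $\sigma_w(F)=\operatorname{id}$ by \eqref{eq:high-gauge}, so the limiting representative is $g_{\PW}$ itself.
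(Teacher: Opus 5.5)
Your proof is essentially the paper's. You expand the components of $F$ in the weighted fiber variables and note that conjugation by $\delta_\epsilon$ multiplies each term by $\epsilon$ raised to (input weight minus output weight), so admissibility leaves only the principal part in the limit. You then use $\epsilon^{-2}\delta_\epsilon^*(F^*g)=F_\epsilon^*(\epsilon^{-2}\delta_\epsilon^*g)$. That identity holds simply because $F\delta_\epsilon=\delta_\epsilon F_\epsilon$, not because of fiber linearity. Your remainder estimate and domain bookkeeping make rigorous what the paper's monomial-by-monomial sketch leaves implicit, including flat terms.

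One correction is needed. Your weighted Hadamard lemma is false as stated, with $|\alpha|+2\beta=m$, whenever $m$ is odd. For example, $v\in\mathcal F_\gamma^1$ is not in the ideal generated by the $x^i$. Likewise $v^2\in\mathcal F_\gamma^3$ is not in the ideal generated by $x^ix^jx^k$ and $x^iv$, since all of these vanish on $x=0$. These odd cases are exactly your $R_0$ ($m=1$) and $R_2$ ($m=3$).

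State the lemma with $|\alpha|+2\beta\ge m$ instead, using finitely many minimal monomials of weight $m$ or $m+1$; this is what Taylor's theorem with remainder actually gives. The prefactor $\epsilon^{|\alpha|+2\beta-k}$ then still has exponent at least one, so the rest of your argument is unchanged.
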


\begin{proof}
Taylor-expand each component of $F$ in the variables of weights one and two.
Conjugation multiplies a monomial by $\epsilon$ to the difference between its
input and output weights.  Admissibility excludes negative differences, the
terms of difference zero give \eqref{eq:principal-coordinate-form}, and all
positive differences vanish with all derivatives.  Moreover,
\[
 \epsilon^{-2}\delta_\epsilon^*(F^*g)
 =F_\epsilon^*(\epsilon^{-2}\delta_\epsilon^*g),
\]
which gives the metric assertion, equivalently
\Cref{lem:weighted-functoriality} applied to the metric symbol.
\end{proof}

\subsection{Approximate dilations}

Let $\mathcal F_\gamma^r$ be the ideal of functions whose weighted Taylor
expansion along $\gamma$ contains no term of weight less than $r$.  Thus, in an
adapted chart, $x^i\in\mathcal F_\gamma^1$, $v\in\mathcal F_\gamma^2$, and
$u$ has weight zero.  Gauge degeneration shows that this filtration is
independent of the adapted realization.  Its associated graded algebra is the
fiberwise polynomial algebra on $W_\gamma$, and its Euler derivation is
\begin{equation}\label{eq:euler-derivation}
 \varepsilon(u)=0,
 \qquad \varepsilon(x^i)=x^i,
 \qquad \varepsilon(v)=2v.
\end{equation}

\begin{definition}\label{def:approximate-dilation}
An \emph{approximate dilation} along $\gamma$ is a vector-field germ $X$ which
preserves every $\mathcal F_\gamma^r$, induces the Euler derivation
\eqref{eq:euler-derivation} on the associated graded algebra, and fixes the
chosen affine parameter: $X(u)=0$.
\end{definition}

\begin{proposition}[Coordinate characterization]
\label{prop:approximate-coordinate}
A vector field is an approximate dilation if and only if, in one and hence
every adapted chart, it has the form
\begin{equation}\label{eq:approximate-coordinate}
 X=2v\partial_v+x^i\partial_{x^i}
   +R_v\partial_v+R_x^i\partial_{x^i},
 \qquad R_v=O_w(3),\quad R_x^i=O_w(2).
\end{equation}
Every weighted first jet inducing the Euler derivation admits such a local
extension.
\end{proposition}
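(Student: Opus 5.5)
I will prove the characterization in a fixed adapted chart $(u,x,v)$, then check chart independence, and finally construct extensions. Write an arbitrary vector-field germ as $X=X^u\partial_u+X^i\partial_{x^i}+X^v\partial_v$. The condition $X(u)=0$ gives $X^u=0$ immediately. For the filtration condition, I will use that $\mathcal F^r_\gamma$ is generated, as an ideal over smooth functions of $u$ together with the flat remainder ideal, by the monomials $x^\alpha v^m$ with $|\alpha|+2m\ge r$. A derivation with $X^u=0$ therefore preserves every $\mathcal F^r_\gamma$ if and only if it does not lower weights on the generators, that is, $X^i=X(x^i)\in\mathcal F^1_\gamma$ and $X^v=X(v)\in\mathcal F^2_\gamma$.

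The induced derivation on the associated graded sends the class of $x^i$ to the weight-one part of $X^i$, and the class of $v$ to the weight-two part of $X^v$. Matching the Euler derivation \eqref{eq:euler-derivation} on these generators then forces $X^i=x^i+O_w(2)$ and $X^v=2v+O_w(3)$. A derivation of the graded algebra is determined by its values on generators, with $u$ of weight zero and killed. So the conditions are equivalent to \eqref{eq:approximate-coordinate}, and this proves the equivalence in one chart.

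Independence of the chart needs no separate computation. The defining conditions in Definition~\ref{def:approximate-dilation} are intrinsic, because the filtration is realization-independent by Theorem~\ref{thm:gauge-degeneration}, the Euler derivation is canonical on the graded algebra of $W_\gamma$, and $u$ is the chosen affine parameter. Hence the coordinate form holds in every adapted chart as soon as it holds in one. One subtlety is that a chart change with $u'=u+O_w(1)$ alters $u$. I will read ``adapted chart'' as one whose $u$ is the chosen parameter on the germ, consistent with the condition $X(u)=0$. Alternatively, $X(u')=X(u+O_w(1))$ is of positive weight, and this matches the allowed error terms.

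For the extension statement, a weighted first jet inducing the Euler derivation means prescribed leading parts of $X^i$ and $X^v$ up to the stated orders. The plan is to take the explicit field $2v\partial_v+x^i\partial_{x^i}$ plus any polynomial representatives of the prescribed higher-weight terms, and then cut off with a bump function in the normal variables. The cutoff is harmless because the conditions are germ conditions along $\gamma$. I expect the main obstacle to be the precise meaning of ``weighted first jet.'' If it includes prescribed $O_w(2)$ and $O_w(3)$ components, a Borel-type construction realizes them, and smooth dependence on $u$ is preserved by a parametrized Borel lemma. If it means only the principal part, the extension is trivial.
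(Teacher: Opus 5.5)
Your proof is correct and follows the paper's route: the filtration and associated-graded conditions amount to $X(x^i)-x^i\in\mathcal F^2_\gamma$ and $X(v)-2v\in\mathcal F^3_\gamma$, which together with $X(u)=0$ give \eqref{eq:approximate-coordinate}, and existence comes from adding arbitrary higher-weight terms to $x^i\partial_{x^i}+2v\partial_v$ in one adapted realization. Two small points on top of that.

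Your caveat about transitions $u'=u+O_w(1)$ catches something the paper passes over. For example, $u'=u+x^1$ gives $X(u')=x^1+O_w(2)\neq0$. So only your first reading, where adapted charts share the function $u$, makes ``every adapted chart'' true. Your ``alternatively'' clause does not rescue the statement, because the display allows no $\partial_u$ remainder at all.

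Also, $\mathcal F^r_\gamma$ should be described as the ideal generated over all smooth germs by the monomials of weight $\ge r$, which follows from Taylor's formula with remainder. Describing it as polynomials over $C^\infty(u)$ plus flat functions is not correct, although your Leibniz-rule conclusion is unaffected.
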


\begin{proof}
The condition on the associated graded says that
$X(x^i)-x^i\in\mathcal F_\gamma^2$ and
$X(v)-2v\in\mathcal F_\gamma^3$; together with $X(u)=0$ this is
\eqref{eq:approximate-coordinate}.  The converse is immediate.  For existence,
choose one adapted realization and start with
$2v\partial_v+x^i\partial_{x^i}$; adding any prescribed higher-weight terms
leaves its induced derivation unchanged.
\end{proof}

Approximate dilation fields determine the same gauge class.  A final
higher-order coordinate change may be needed to obtain the same representative.

\begin{proposition}[Approximate dilations]\label{prop:approximate-dilation}
Let $X$ be an approximate dilation, write
$D=x^i\partial_{x^i}+2v\partial_v$, and set $X=D+R$ as in
\eqref{eq:approximate-coordinate}.  Let $\Phi_t$ be the local flow of $X$ and put
\[
 H_t=\delta_{e^{-t}}^{-1}\circ\Phi_{-t}.
\]
On a sufficiently small germ, $H_t$ converges as $t\to+\infty$ to an element
$H_\infty\in\Ghigh$, and
\begin{equation}\label{eq:approximate-limit}
 e^{2t}\Phi_{-t}^*g\longrightarrow H_\infty^*g_{\PW}.
\end{equation}
Thus an approximate dilation canonically determines the $\Ghigh$-class of the
Penrose germ.  Different extensions may select different representatives of
that class.
\end{proposition}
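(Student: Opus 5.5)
The plan is to reduce everything to an ODE for $H_t$ in which the non-linear part decays exponentially, and then to apply \Cref{thm:gauge-degeneration}. Work in one adapted chart, so that $D$ generates $\delta$: the flow of $D$ is $\delta_{e^t}$. Since $\delta_{e^{-t}}^{-1}=\delta_{e^t}$, we have $H_t=\delta_{e^t}\circ\Phi_{-t}$.

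\textbf{Step 1: the ODE for $H_t$.} Differentiating in $t$ gives
\[
 \tfrac{d}{dt}H_t=\bigl(D-(\delta_{e^t})_*X\bigr)\circ H_t=-\bigl((\delta_{e^t})_*R\bigr)\circ H_t .
\]
So $H_t$ is the time-dependent flow of $R_t:=-(\delta_{e^t})_*R$, starting at $H_0=\operatorname{id}$.

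\textbf{Step 2: decay of $R_t$.} Write $s=e^t$. Then
\[
 (\delta_s)_*R=s\,R_x^i(\delta_s^{-1}y)\,\partial_{x^i}+s^2R_v(\delta_s^{-1}y)\,\partial_v .
\]
Because $R_x^i=O_w(2)$ and $R_v=O_w(3)$, every Taylor monomial acquires a factor $s^{-k}$ with $k\ge1$. This is the same bookkeeping as in the proof of \Cref{thm:gauge-degeneration}. Hence $R_t=O(e^{-t})$ in $C^\infty$ on a fixed compact weighted neighbourhood, provided the Taylor remainders are handled by Taylor's formula with integral remainder. This estimate is the main obstacle. The remainder terms are evaluated at $\delta_s^{-1}y$, which lies in a shrinking neighbourhood of $\gamma$, so one needs uniform derivative bounds there; these come from smoothness of $R$ on a fixed small germ.

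\textbf{Step 3: convergence of $H_t$.} Shrinking the germ, the flow $\Phi_{-t}$ is contracting, since its principal part is $\delta_{e^{-t}}$. It therefore exists for all $t\ge0$ and keeps a small neighbourhood inside the chart. Integrability of $\|R_t\|_{C^k}\lesssim e^{-t}$ and a Gr\"onwall estimate on the derivatives of $H_t$ then show that $H_t$ is Cauchy in $C^\infty$. It converges to a diffeomorphism germ $H_\infty$, which fixes $\gamma$ pointwise and preserves $u$ because $X(u)=0$. For each $t$, the principal part of $H_t$ is the identity: $\sigma_w(\Phi_{-t})=\delta_{e^{-t}}$, because $\sigma_w(X)=D$. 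The generator $R_t$ is also of strictly positive weight, so $\sigma_w(H_t)=\operatorname{id}$ is preserved in the limit, and $H_\infty\in\Ghigh$.

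\textbf{Step 4: the metric limit.} From $\Phi_{-t}=\delta_{e^{-t}}\circ H_t$ we get
\[
 e^{2t}\Phi_{-t}^*g=H_t^*\bigl(\epsilon^{-2}\delta_\epsilon^*g\bigr),\qquad \epsilon=e^{-t}.
\]
By \eqref{eq:scaled-metric} and \Cref{prop:penrose-symbol}, $g_\epsilon\to g_{\PW}$ in $C^\infty$ on compacts. Joint continuity of pullback in $C^\infty$ then yields \eqref{eq:approximate-limit}. Different extensions $R$ give different $H_\infty$, but always in $\Ghigh$. By \Cref{thm:gauge-degeneration}, all such limits therefore have the same principal part and define the same $\Ghigh$-class.
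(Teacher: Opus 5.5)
Your proposal is correct and follows essentially the same route as the paper. Conjugating the remainder $R$ by the dilations gives a generator $-(\delta_{e^t})_*R$ that decays like $e^{-t}$ in every weighted component. Integrating this time-dependent flow (the paper's ``variation of constants'') gives $C^\infty$ convergence of $H_t$ to an element with identity principal part, and the factorization $\Phi_{-t}=\delta_{e^{-t}}\circ H_t$ together with \Cref{prop:penrose-symbol} gives the metric limit; your version makes explicit the ODE for $H_t$ and the uniform estimates that the paper's three-line proof leaves implicit.
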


\begin{proof}
After conjugation by $\delta_{e^{-t}}$, the vector field $R$ acquires a strictly
decaying factor in every weighted homogeneous component.  Variation of
constants gives convergence of $H_t$ and of all derivatives; its principal
part is the identity.  Factoring
$\Phi_{-t}=\delta_{e^{-t}}H_t$ in the rescaled pullback and using
\Cref{prop:penrose-symbol} gives \eqref{eq:approximate-limit}.
\end{proof}

\section{Contact geometry and incidence evaluation}
\label{sec:contact-incidence}

We now pass from one null geodesic to a local family.  All assertions are local
on the set of null geodesics, so we assume that the quotient by the null
geodesic flow is a smooth cooriented contact manifold $\cN$ of dimension
$2n+1$.  Global causality and Hausdorff questions lie outside this local
argument; see
\cite{Low1989,Low2006} for the global theory.

\subsection{The null cone and symplectization}

Let
\begin{equation}\label{eq:null-cotangent-cone}
 \Sigma=\{(x,p)\in T^*M\setminus0:g^{-1}_x(p,p)=0\}
\end{equation}
be one cooriented component of the nonzero null cotangent cone.  Write
$\alpha$ for the tautological form on $T^*M$, $\omega=d\alpha$, and
$H(x,p)=\tfrac12g^{-1}_x(p,p)$.  The characteristic line of
$\omega|_\Sigma$ is generated by the Hamiltonian vector field $X_H$; its
integral curves are the affinely parametrized cotangent lifts of null
geodesics.  The fiber Euler field $\mathcal E=p_a\partial_{p_a}$ rescales their
affine normalization.

\begin{theorem}[Null-cone reduction]\label{thm:null-cone-reduction}
Locally, the quotient
\begin{equation}\label{eq:scaled-ray-space}
 \mathcal G=\Sigma/\langle X_H\rangle
\end{equation}
is symplectic, and its quotient by the positive flow of $\mathcal E$ is the
contact manifold $\cN$ of unparametrized null geodesics.  Equivalently,
$\mathcal G$ is the symplectization of $\cN$: after choosing a positive contact
form $\theta$ on $\cN$ there is a local identification
\begin{equation}\label{eq:symplectization}
 (\mathcal G,\omega_{\mathcal G})
 \cong(\mathbb R_{>0}\times\cN,d(r\theta)).
\end{equation}
The coordinate $r$ is the scale of the null covector and therefore the scale
of the affine tangent $k=g^{-1}(p)$.
\end{theorem}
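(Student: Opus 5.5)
The plan is standard coisotropic reduction followed by a homogeneity argument.

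\emph{Step 1: $\Sigma$ is coisotropic.} Since $dH=g^{-1}(p,\cdot)$ on the vertical directions and $p\neq0$, we have $dH\neq0$ along $\Sigma$. Hence $\Sigma=H^{-1}(0)$ is a smooth hypersurface in the symplectic manifold $(T^*M,\omega)$. Every hypersurface is coisotropic, and $\ker(\omega|_\Sigma)=T\Sigma^{\omega}$ is the line spanned by $X_H$, which is tangent to $\Sigma$ because $X_H H=0$.

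\emph{Step 2: the reduced space $\mathcal G$.} By hypothesis the local quotient by the characteristic foliation is smooth; this is where local smoothness of $\cN$ is used, since $\cN$ and $\mathcal G$ differ by a free $\mathbb R_{>0}$ action. Standard Marsden--Weinstein/coisotropic reduction then gives a unique symplectic form $\omega_{\mathcal G}$ with $\pi^*\omega_{\mathcal G}=\omega|_\Sigma$. The leaves are exactly the integral curves of $X_H$, which are the cotangent lifts of affinely parametrized null geodesics.

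\emph{Step 3: the dilation $\mathcal E$ descends.}

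1. \emph{Tangency and commutation.} $H$ is fiberwise quadratic, so $\mathcal E H=2H$. Thus $\mathcal E$ is tangent to $\Sigma$. Moreover $\mathcal L_{\mathcal E}\alpha=\alpha$ and $\mathcal L_{\mathcal E}\omega=\omega$, while $[\mathcal E,X_H]=X_H$, since $X_H$ is homogeneous of degree one in $p$. So the flow of $\mathcal E$ permutes the characteristic leaves and descends to a vector field $\mathcal E_{\mathcal G}$ on $\mathcal G$.
2. \emph{Liouville property.} The leaves of the descended flow are the rescalings $p\mapsto\lambda p$, which change the affine parametrization but not the unparametrized geodesic. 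Hence $\mathcal G/\mathbb R_{>0}=\cN$. Also $\mathcal E_{\mathcal G}$ is a Liouville field: $\mathcal L\omega_{\mathcal G}=\omega_{\mathcal G}$.
3. \emph{A primitive on $\mathcal G$.} The crucial point is that $\alpha|_\Sigma$ itself descends to $\mathcal G$. It is basic for the $X_H$-foliation because
\[
\iota_{X_H}\alpha=p(\partial_pH)=2H=0 \quad\text{on }\Sigma,
\]
and
\[
\mathcal L_{X_H}\alpha=d\iota_{X_H}\alpha+\iota_{X_H}\omega=d(2H)-dH=dH,
\]
which vanishes on $T\Sigma$. This gives a 1-form $\alpha_{\mathcal G}$ with $d\alpha_{\mathcal G}=\omega_{\mathcal G}$ and $\iota_{\mathcal E_{\mathcal G}}\omega_{\mathcal G}=\alpha_{\mathcal G}$.
4. \emph{The contact structure on $\cN$.} Since $\alpha_{\mathcal G}$ is homogeneous of degree one and annihilates $\mathcal E_{\mathcal G}$, its kernel descends to a hyperplane field on $\cN$. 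Nondegeneracy of $\omega_{\mathcal G}$ forces this hyperplane field to be contact, by the standard fact that a free proper Liouville action on a symplectic cone presents it as a symplectization. The coorientation comes from the chosen component of $\Sigma$.

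\emph{Step 4: the trivialization \eqref{eq:symplectization}.} Given a positive contact form $\theta$ on $\cN$, define $r$ by $\alpha_{\mathcal G}=r\,\varpi^*\theta$, where $\varpi:\mathcal G\to\cN$. This works because both forms have the same kernel and coorientation. Then $r$ is positive and homogeneous of degree one, so $(r,\varpi)$ is an equivariant diffeomorphism to $\mathbb R_{>0}\times\cN$ carrying $\alpha_{\mathcal G}$ to $r\theta$, and therefore $\omega_{\mathcal G}$ to $d(r\theta)$. Since $r$ has degree one under $p\mapsto\lambda p$, it is the scale of the covector, and hence of $k=g^{-1}(p)$, whose affine parameter rescales inversely.

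\emph{Main obstacle.} The only delicate point is the basicness computation for $\alpha|_\Sigma$ in Step 3(3). It relies on $H$ being quadratic, through $\iota_{X_H}\alpha=2H$; with a different homogeneity the result would not be a symplectization but only a twisted one.
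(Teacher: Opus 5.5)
Your proof is correct and follows the paper's route: characteristic reduction of $\Sigma$, descent of the fiber dilations via $[\mathcal E,X_H]=X_H$, and writing the descended degree-one tautological form as $r\theta$; your explicit check that $\alpha|_\Sigma$ is basic and your nondegeneracy argument for contactness fill in steps the paper leaves implicit. One correction to your closing remark: basicness does not depend on $H$ being quadratic, since for $H$ homogeneous of any degree $m$ one has $\iota_{X_H}\alpha=mH$ and $\mathcal L_{X_H}\alpha=(m-1)\,dH$, both of which vanish on $\Sigma$, and more intrinsically $\alpha=\iota_{\mathcal E}\omega$ with $\mathcal E$ tangent to the conic hypersurface $\Sigma$ already kills the characteristic direction.
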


\begin{proof}
The kernel of $\omega|_{T\Sigma}$ is the characteristic line
$\langle X_H\rangle$, so local characteristic reduction gives the symplectic
space $\mathcal G$.  Since $H$ is homogeneous of degree two,
$[\mathcal E,X_H]=X_H$; consequently fiber dilations preserve characteristic
leaves and descend to $\mathcal G$.  The tautological form is homogeneous of
degree one.  Choosing a section of the positive dilation quotient therefore
writes its descendant as $r\theta$ and gives
\eqref{eq:symplectization}.  Quotienting by $r$ leaves the contact hyperplane
$\ker\theta$ on $\cN$.
\end{proof}

The contact quotient retains the ray and discards its affine scale.  A contact
scale restores the normalization, and an incident point $x\in\gamma$ supplies
an origin when one is needed.

\subsection{The incidence bundles}

Set
\begin{equation}\label{eq:incidence}
 \cU=\{(x,[\gamma])\in M\times\cN:x\in\gamma\},
 \qquad
 \rho:\cU\to M,
 \quad \pi:\cU\to\cN.
\end{equation}
The fibers of $\pi$ are the null geodesics and the fibers of $\rho$ are their
skies.  On $\cU$ define the null line, screen bundle, and weight-two line by
\begin{equation}\label{eq:incidence-flag}
 \cK=d\rho(\ker d\pi)\subset\rho^*TM,
 \qquad
 \cE=\cK^\perp/\cK,
 \qquad
 \cL=\rho^*TM/\cK^\perp.
\end{equation}
Thus $\cE[1]\oplus\cL[2]$ is the family version of
\eqref{eq:weighted-normal}.

\subsection{Jacobi fields and skies}

Let $p=[\gamma]\in\cN$.  A tangent vector to $\cN$ is represented by a Jacobi
field arising from a variation through null geodesics, modulo the tangential
Jacobi fields $(a+bu)k$.  For such a representative, $g(J,k)$ is constant along
$\gamma$.  The canonical contact hyperplane is
\begin{equation}\label{eq:contact-hyperplane}
 \cC_p=\{[J]\in T_p\cN:g(J,k)=0\}.
\end{equation}
The Wronskian induces the conformal symplectic Levi form on $\cC_p$.

Projection to the screen gives a canonical identification
\begin{equation}\label{eq:contact-transverse-jacobi}
 \cC_p\xrightarrow{\sim}\mathcal J_\gamma^\perp.
\end{equation}
Indeed, an abreast representative takes values in $k^\perp$ and hence projects
to a solution of \eqref{eq:jacobi}.  The kernel consists of Jacobi fields
$f(u)k$ with $f''=0$, the tangential fields already removed in
$T_p\cN$.  Under \eqref{eq:contact-transverse-jacobi}, the Levi form agrees,
up to the contact scale, with the Wronskian \eqref{eq:wronskian}.  This also
distinguishes the full null-variation Jacobi space used to describe
$T_p\cN$ from the $2n$-dimensional transverse solution space.

For $q=(x,p)\in\cU$, let
\begin{equation}\label{eq:sky-lagrangian}
 \cH_q=d\pi(\ker d\rho)_q
       =T_p\Sky(x)\subset\cC_p.
\end{equation}
In Jacobi language, $\cH_q$ consists of the classes which have a representative
vanishing at $x$.  It has dimension $n$ and is Lagrangian for the Levi form.
As $x$ moves along $\gamma$, $\cH_q$ is the Hamiltonian curve of
vanishing Jacobi data.

Evaluation at the incident point relates contact phase space to spacetime.

\begin{theorem}[Incidence evaluation]\label{thm:incidence-evaluation}
At $q=(x,p)\in\cU$, evaluation of Jacobi fields induces a canonical exact
sequence
\begin{equation}\label{eq:evaluation-exact}
 0\longrightarrow\cH_q\longrightarrow T_p\cN
 \xrightarrow{\ \operatorname{ev}_q\ }T_xM/\cK_q
 \longrightarrow0.
\end{equation}
It respects the filtrations and therefore induces canonical isomorphisms
\begin{equation}\label{eq:evaluation-graded}
 \cC_p/\cH_q\xrightarrow{\sim}\cE_q,
 \qquad
 T_p\cN/\cC_p\xrightarrow{\sim}\cL_q.
\end{equation}
The first is the weight-one soldering and the second is the weight-two
soldering.
\end{theorem}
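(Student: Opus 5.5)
Write $u_0$ for the affine parameter of $x$ on $\gamma$, so that $x=\gamma(u_0)$. The plan is to define $\operatorname{ev}_q$ directly on Jacobi representatives, check that it is well defined and surjective with kernel $\cH_q$, and then read off the graded isomorphisms by tracking the invariant $g(J,k)$.

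\textbf{Definition and well-definedness.} A class $[J]\in T_p\cN$ is represented by a null-variation Jacobi field $J$, determined modulo the tangential fields $(a+bu)k$. We set
\[
\operatorname{ev}_q[J]=J(u_0)\bmod \cK_q .
\]
Changing the representative alters $J(u_0)$ by $(a+bu_0)k\in\cK_q$, so the map is well defined. It is linear, and it is intrinsic: as an alternative description, lift a curve $p(s)$ in $\cN$ to a curve $(x(s),p(s))$ in $\cU$ through $q$ and take $d\rho(\dot q)$ modulo $\cK_q$. Different lifts differ by $\ker d\pi$, whose image under $d\rho$ is $\cK_q$, so this agrees with the Jacobi description.

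\textbf{Exactness of \eqref{eq:evaluation-exact}.} The kernel consists of the classes having a representative $J$ with $J(u_0)\in\langle k\rangle$. Subtracting a suitable tangential field $a k$ gives a representative with $J(u_0)=0$, and such classes are exactly $\cH_q$ by \eqref{eq:sky-lagrangian}. For surjectivity we count dimensions: $\dim T_p\cN=2n+1$, $\dim\cH_q=n$, and $\dim T_xM/\cK_q=n+1$. It is therefore enough to show the kernel is no larger than $\cH_q$, which we have just done, or equivalently that $\dim\cH_q=n$, which was already stated.

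\textbf{Filtrations.} For a null-variation Jacobi field the quantity $g(J,k)$ is constant along $\gamma$. Hence $[J]\in\cC_p$ if and only if $g(J(u_0),k)=0$, that is, if and only if $\operatorname{ev}_q[J]\in\cK_q^\perp/\cK_q=\cE_q$. So $\operatorname{ev}_q$ maps $\cC_p$ onto $\cE_q$. It is onto because
\[
\operatorname{ev}_q^{-1}(\cE_q)=\cC_p \quad\text{and}\quad \cH_q\subset\cC_p ,
\]
and the dimensions agree: $\dim\cC_p/\cH_q=n=\dim\cE_q$. This gives the first isomorphism in \eqref{eq:evaluation-graded}. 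The second follows by passing to quotients: $T_p\cN/\cC_p\to (T_xM/\cK_q)/\cE_q=\cL_q$ is an isomorphism, being a map between lines that is nonzero because $\operatorname{ev}_q$ is onto. Concretely, it sends $[J]$ to the class of $J(u_0)$, and this class is detected by the pairing $g(J,k)$.

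As a consistency check, the first isomorphism matches the description via \eqref{eq:contact-transverse-jacobi}: it is evaluation at $u_0$ of the screen projection. Its kernel is $H_\gamma(u_0)$ in \eqref{eq:vanishing-lagrangian}, so it is compatible with the Cauchy-data isomorphism \eqref{eq:jacobi-cauchy-data}.

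\textbf{Main obstacle.} The delicate point is the identification of $\cH_q$ with the classes admitting a representative vanishing at $x$, together with $\dim\cH_q=n$. This requires the local smoothness of $\cU$ and of the sky $\Sky(x)$, so that $d\pi$ restricted to $\ker d\rho$ is injective with image of dimension $n$. To establish it, I would parametrize $\Sky(x)$ by null directions at $x$, i.e. the celestial sphere $S^{n}$ of dimension $n$. Variations through geodesics emanating from $x$ produce Jacobi fields vanishing at $u_0$. Injectivity holds because such a field that is tangential mod $(a+bu)k$ must be proportional to $(u-u_0)k$, and this corresponds to a rescaling rather than a change of direction.
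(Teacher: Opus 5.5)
Your argument is correct and is essentially the paper's. You define $\operatorname{ev}_q$ on Jacobi representatives modulo $\cK_q$, identify its kernel with $\cH_q$ through representatives vanishing at $x$, and use the constancy of $g(J,k)$ to get $\cC_p=\operatorname{ev}_q^{-1}(\cE_q)$. The only variation is surjectivity: you obtain it by a dimension count, whereas the paper completes an arbitrary value at $x$ directly to null-variation initial data, since the constraint $g(\nabla_kJ,k)=0$ restricts only the derivative. Note that your count needs both $\ker\operatorname{ev}_q=\cH_q$ and $\dim\cH_q=n$, not one \emph{or} the other as your wording suggests.
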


\begin{proof}
For a Jacobi representative $J$, set
$\operatorname{ev}_q[J]=[J(x)]\bmod\cK_q$.  Replacing $J$ by
$J+(a+bu)k$ leaves the result unchanged.  The kernel consists of
variations which can be represented by a Jacobi field vanishing at $x$, hence
is $\cH_q$.  Arbitrary initial value at $x$ can be completed to Jacobi initial
data satisfying the null-variation constraint, so evaluation is surjective.

If $[J]\in\cC_p$, then $g(J,k)=0$ and its value lies in
$\cK_q^\perp/\cK_q=\cE_q$.  Conversely, this orthogonality is equivalent to
the contact condition.  Taking the subquotient and quotient of
\eqref{eq:evaluation-exact} gives \eqref{eq:evaluation-graded}.
\end{proof}

The theorem is independent of a parameter, screen, contact form, or
polarization.  It is the canonical part of the soldering construction.

\subsection{Contact-scale jets}

Let
\begin{equation}\label{eq:contact-line-scale}
 \mathcal Z=T\cN/\cC,
 \qquad
 \cS=(\mathcal Z^*)_+
\end{equation}
be the contact line and its positive scale bundle.  A local section of $\cS$ is
a positive contact form $\theta$ with $\ker\theta=\cC$.  Its first jet
$\eta=j^1_p\theta$ determines both $\theta_p$ and $d\theta_p$.  Hence it
determines
\begin{equation}\label{eq:reeb}
 \omega_\eta=d\theta_p|_{\cC_p},
 \qquad
 \theta_p(R_\eta)=1,
 \qquad
 \iota_{R_\eta}d\theta_p=0.
\end{equation}
In particular, $\eta$ splits
$T_p\cN=\cC_p\oplus\mathbb R R_\eta$.

Choose Darboux coordinates $(x^i,\xi_i,v)$ centered at $p$ in which
\begin{equation}\label{eq:contact-darboux}
 \theta=dv+\xi_i\,dx^i.
\end{equation}
At the origin the Reeb vector is $\partial_v$, the contact hyperplane has
coordinates $(x,\xi)$, and the osculating dilations are
\begin{equation}\label{eq:heisenberg-dilation}
 \delta_\epsilon^{\mathrm H}(x,\xi,v)
 =(\epsilon x,\epsilon\xi,\epsilon^2v).
\end{equation}
Their grading field is
\begin{equation}\label{eq:heisenberg-euler}
 D_{\mathrm H}=x^i\partial_{x^i}
 +\xi_i\partial_{\xi_i}+2v\partial_v.
\end{equation}

\begin{theorem}[Heisenberg tangent model]\label{thm:heisenberg-model}
The nilpotent approximation of $(\cN,\cC)$ at $p$ is the graded Heisenberg
algebra
\begin{equation}\label{eq:heisenberg-algebra}
 \mathfrak h_p=\cC_p[1]\oplus\mathcal Z_p[2],
 \qquad
 \theta_p([X,Y]_{\mathrm{gr}})=-d\theta_p(X,Y).
\end{equation}
Its dilation is intrinsic and acts with weights one and two.  A contact scale
identifies the central line with $\mathbb R R_\eta$; the grading is independent
of this identification.
\end{theorem}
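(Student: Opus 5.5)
The plan is to compute the nilpotent approximation directly in the Darboux chart \eqref{eq:contact-darboux} and then check that the result does not depend on the chart or the contact form. Recall that the nilpotent (Tanaka) approximation of a filtered manifold $(\cN,\cC)$ at $p$ is the graded Lie algebra $\mathfrak m_p=\cC_p\oplus T_p\cN/\cC_p$. Its bracket on degree one is the Levi bracket: for $X,Y\in\cC_p$, extend them to sections $\tilde X,\tilde Y$ of $\cC$ and set $[X,Y]_{\mathrm{gr}}=[\tilde X,\tilde Y]_p\bmod\cC_p$. First I will show this is tensorial. Since $\theta(\tilde X)=\theta(\tilde Y)=0$, Cartan's formula gives
\[
 \theta([\tilde X,\tilde Y])=-d\theta(\tilde X,\tilde Y),
\]
which depends only on the values at $p$. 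Applying $\theta_p$, which identifies $\mathcal Z_p$ with $\mathbb R$ once a scale is fixed, yields exactly the relation in \eqref{eq:heisenberg-algebra}. Replacing $\theta$ by $f\theta$ with $f>0$ scales both sides by $f(p)$, since $d(f\theta)|_{\cC}=f\,d\theta|_{\cC}$. So the bracket $\cC_p\times\cC_p\to\mathcal Z_p$ is intrinsic. Contact nondegeneracy makes $d\theta_p|_{\cC_p}$ symplectic, so $\mathfrak h_p$ is a Heisenberg algebra of dimension $2n+1$ with center $\mathcal Z_p$.

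Next I will identify the osculating model in coordinates. For $\theta=dv+\xi_i\,dx^i$, the fields $X_i=\partial_{x^i}-\xi_i\partial_v$ and $\Xi^i=\partial_{\xi_i}$ frame $\cC$, and $[X_i,\Xi^j]=\delta_i^j\partial_v$. Under $\delta^{\mathrm H}_\epsilon$ from \eqref{eq:heisenberg-dilation}, the rescaled fields $\epsilon\,(\delta^{\mathrm H}_\epsilon)^*X_i$ and $\epsilon\,(\delta^{\mathrm H}_\epsilon)^*\Xi^j$ are in fact invariant, and $\epsilon^2(\delta^{\mathrm H}_\epsilon)^*\partial_v=\partial_v$. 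Hence the frame generates a copy of $\mathfrak h_p$ that is homogeneous of degrees $1,1,2$ with respect to $D_{\mathrm H}$. This realizes the dilation acting with weights one and two. For a general contact structure in this chart, the filtered-frame argument of \Cref{thm:gauge-degeneration} shows that the leading weighted parts of the $\cC$-frame are these model fields.

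To show the dilation is intrinsic, I will argue as in \Cref{sec:weighted-gauge}. Two Darboux charts centered at $p$ differ by a contactomorphism fixing $p$, and such a map preserves the filtration $\cC\subset T\cN$, hence the weighted ideals defined as in the approximate dilation subsection, with $x,\xi$ of weight one and $v$ of weight two. By \Cref{lem:weighted-functoriality} and \Cref{thm:gauge-degeneration}, its conjugates by $\delta^{\mathrm H}_\epsilon$ converge to a graded principal part. That principal part commutes with the dilation, so the Euler derivation $D_{\mathrm H}$ on the associated graded is chart-independent. Abstractly, the grading is simply the one on $\cC_p\oplus\mathcal Z_p$, which involves no choices. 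Finally, the jet $\eta$ gives the Reeb vector $R_\eta$ of \eqref{eq:reeb}, and its class in $\mathcal Z_p$ is the unique element with $\theta_p=1$. This identifies the center with $\mathbb R R_\eta$. A different jet rescales or shifts $R_\eta$ by an element of $\cC_p$, but the shift vanishes modulo $\cC_p$, so the grading is unaffected.

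The main obstacle is the second step: the statement that the limit of the rescaled structure is the nilpotent approximation for an arbitrary chart, which requires that the error terms in the frame have strictly positive weight. I would handle this by writing any adapted $\cC$-frame as the model frame plus $O_w(1)$-corrections (relative to degree) and using the conjugation-weight counting from the proof of \Cref{thm:gauge-degeneration}. Everything else is formal.
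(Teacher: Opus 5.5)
Your proposal is correct and takes essentially the paper's approach. The Levi bracket $\theta([X,Y])=-d\theta(X,Y)$ with nondegeneracy gives the Heisenberg algebra, and the Darboux chart exhibits the standard bracket and weight-$(1,1,2)$ dilation; your Cartan-formula tensoriality check, scale covariance, explicit frame and Reeb-class argument supply details the paper leaves implicit. The ``main obstacle'' you flag does not arise: in a Darboux chart the frame $X_i,\Xi^j,\partial_v$ is exactly homogeneous, so there are no error terms to control.
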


\begin{proof}
The contact distribution is bracket-generating of step two, and its Levi
bracket is the quotient of $[X,Y]$ in $T_p\cN/\cC_p$.  Since the Levi form is
nondegenerate, this is the Heisenberg algebra.  In
\eqref{eq:contact-darboux} the bracket and the dilations are the standard ones,
and \eqref{eq:heisenberg-euler} represents the grading derivation.
\end{proof}

\begin{theorem}[Realized grading from a contact-scale jet]
\label{thm:grading-from-contact-jet}
A jet $\eta=j^1_p\theta$ canonically determines a weighted first-jet class
$[D_\eta]$ of vector-field germs at $p$ with the following properties:
\begin{enumerate}[label=(\roman*)]
\item its induced derivation on $\mathfrak h_p$ is the intrinsic grading
      derivation;
\item its degree-two line is represented by $R_\eta$;
\item in any Darboux realization inducing the same symplectic frame,
      $[D_\eta]$ is represented by $D_{\mathrm H}$ modulo terms which induce
      zero on the osculating graded algebra.
\end{enumerate}
Consequently contact forms with the same first jet determine the same realized
grading class.
\end{theorem}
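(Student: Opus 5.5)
The plan is to build $D_\eta$ from a Darboux chart adapted to $\eta$, and then show that the resulting class depends only on $\eta$. First I will make precise what a ``weighted first-jet class'' is. Let $\mathcal F_p^r$ be the filtration of germs at $p$ by Heisenberg weight, with $\cC$-directions of weight one and the contact line of weight two. Call two vector-field germs $X,X'$ equivalent if $X-X'$ raises weight by at least one, that is, $(X-X')\mathcal F_p^r\subset\mathcal F_p^{r+1}$ for all $r$. Such a difference induces zero on the osculating graded algebra $\mathfrak h_p$ of \Cref{thm:heisenberg-model}. With this notion fixed, the construction goes as follows. By Darboux's theorem, choose coordinates $(x^i,\xi_i,v)$ centered at $p$ with $\theta=dv+\xi_i\,dx^i$ as in \eqref{eq:contact-darboux}, for some representative $\theta$ of $\eta$, and set $D_\eta=D_{\mathrm H}$ from \eqref{eq:heisenberg-euler}.

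Properties (i) and (ii) are then direct checks. For (i), $D_{\mathrm H}$ preserves the filtration, and on the associated graded it acts by $1$ on the linear functions $x^i,\xi_i$, which span $\cC_p^*$, and by $2$ on $v$. This is the intrinsic grading derivation of \Cref{thm:heisenberg-model}. For (ii), the degree-two eigendirection of the linearization at $p$ is $\partial_v|_p$. From \eqref{eq:reeb}, $\theta_p(\partial_v)=1$, and $\iota_{\partial_v}d\theta_p=0$ because $d\theta=d\xi_i\wedge dx^i$ does not involve $dv$. Hence $\partial_v|_p=R_\eta$. Note that this uses only $\theta_p$ and $d\theta_p$, which are the data carried by $\eta$.

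The main obstacle is well-definedness: I must show the class does not depend on the Darboux chart or on the representative $\theta$ within the jet $\eta$. This also yields (iii) and the final sentence of the theorem. Take two charts $\phi,\phi'$, adapted to forms $\theta,\theta'$ with $j^1_p\theta=j^1_p\theta'$, and inducing the same symplectic frame of $(\cC_p,\omega_\eta)$. Let $F=\phi'\circ\phi^{-1}$ be the transition. Because it respects the contact distribution to first order at $p$, it preserves the weighted filtration, so it is admissible in the Heisenberg sense. I would therefore decompose $F$ into its weighted principal part plus higher-weight terms, just as in \eqref{eq:adapted-transition} and \Cref{thm:gauge-degeneration}. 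The principal part is a graded automorphism of $\mathfrak h_p$. On $\cC_p$ it is the linear map between the two symplectic frames, which is the identity by hypothesis. On the centre it is determined by $\theta_p$, which both forms share, so it is also the identity.

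The delicate point is the weight-zero quadratic correction $v\mapsto v+Q(x,\xi)$, the analogue of $Q_u$ in \eqref{eq:principal-coordinate-form}. It has weight difference zero, so it is not automatically of higher weight. I would rule it out using the equality of first jets. Since $\theta'=f\theta$ with $f(p)=1$ and $df_p$ fixed by the jet condition, comparing $d\theta'_p$ with $d\theta_p$ and tracking the $dv$-component to second order should force $Q$ to vanish, or at least to be absorbed. If this fails exactly, the fallback is that $Q$ changes $D_{\mathrm H}$ only by a term $\{$weight-two polynomial$\}\partial_v$ of the form $(DQ-2Q)\partial_v$. For quadratic $Q$ this vanishes identically, since $DQ=2Q$.

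Finally, with the principal part equal to the identity, $F^{-1}_*D_{\mathrm H}-D_{\mathrm H}$ consists only of the positive-weight-difference terms. This is the same computation as in \Cref{thm:gauge-degeneration} and \Cref{prop:approximate-coordinate}: conjugation by an element of weighted principal part the identity changes the Euler field only by components $R$ that raise weight. Such components induce zero on $\mathfrak h_p$, which proves (iii). Since any two forms with the same first jet admit Darboux charts inducing a common symplectic frame of $(\cC_p,\omega_\eta)$, they give the same class $[D_\eta]$.
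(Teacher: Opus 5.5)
Your construction is the paper's. You represent the class by $D_{\mathrm H}$ in a Darboux chart for a form with jet $\eta$, read off the Euler derivation on the associated graded, and use $\theta_p$ and $d\theta_p$ to get $\partial_v|_p=R_\eta$. The gap is in how (ii), and with it the dependence on $\eta$, is attached to the class you actually define.

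Modulo \emph{all} weight-raising fields, the class does not depend on $\eta$ at all:
\begin{itemize}
\item Any transition between Darboux charts for forms with kernel $\cC$ is a contactomorphism, hence filtration-preserving.
\item Its principal part $\sigma_w(F)$ is homogeneous, so it commutes with $\delta_\epsilon$ and fixes $D_{\mathrm H}$ exactly.
\item Therefore $F_*D_{\mathrm H}-D_{\mathrm H}$ is weight-raising with no condition on scale, Reeb vector or frame. Your \emph{fallback} computation for $Q$ is a special case of this.
\end{itemize}
Worse, (ii) is not a property of that class. The field $v\,\partial_{x^1}$ raises weight by one, yet the linearization of $D_{\mathrm H}+v\,\partial_{x^1}$ at $p$ has degree-two eigenline spanned by $\partial_v+\partial_{x^1}$, not by $R_\eta$. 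So your check of (ii) concerns one representative only. Your transition analysis, where you thought $\eta$ entered, is not where it enters.

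The repair is to refine the equivalence. Declare $X\sim X'$ when $X-X'$ is weight-raising \emph{and} has zero linearization at $p$; the linearization at a zero is an intrinsic endomorphism of $T_p\cN$. Then $[D_\eta]$ consists of fields that induce the Euler derivation on the graded algebra and whose linearization $L_\eta$ is the identity on $\cC_p$ and $2$ on $R_\eta$. Well-definedness follows from facts you already have. In every Darboux chart for a form with jet $\eta$, $D_{\mathrm H}$ has linearization $L_\eta$, because $\partial_v|_p=R_\eta$ depends only on $\theta_p$ and $d\theta_p$. Two such representatives differ by a weight-raising field, by the principal-part remark above. This is the precise content of the paper's phrase that all the data depend only on $j^1_p\theta$, and it makes (ii) a statement about the class.

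Separately, your \emph{delicate point} has a direct resolution. With the symplectic frame fixed, take the weight-two part of $F^*(dv'+(\xi')^Tdx')=(f\circ\phi^{-1})(dv+\xi^Tdx)$. It reads $d(av+Q)+\xi^Tdx=f(p)\,(dv+\xi^Tdx)$. Hence $f(p)=a=1$ and $dQ=0$, so $Q=0$. No information from $df_p$ is needed.
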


\begin{proof}
The first jet determines $\theta_p$ and $d\theta_p$, hence the Levi form and
the Reeb value through \eqref{eq:reeb}.  The graded algebra then has its
canonical Euler derivation, represented by \eqref{eq:heisenberg-euler} in a
Darboux chart.  Any two such representatives differ by a vector field acting
trivially on the associated graded.  All these data depend only on
$j^1_p\theta$.
\end{proof}

\begin{proposition}[First-jet principle]\label{prop:first-jet-principle}
Fix $\eta=j^1_p\theta$ and a symplectic frame of
$(\cC_p,\omega_\eta)$.  Any two Darboux realization germs inducing that frame,
the scale $\theta_p$, and the Reeb representative $R_\eta$ have transition with
identity Heisenberg principal part.  A change of symplectic frame changes the
principal part by its homogeneous contact lift.  After an incident sky is fixed,
the allowed frame changes reduce to the sky parabolic
\eqref{eq:parabolic} below.
\end{proposition}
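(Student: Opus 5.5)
The strategy is to compute principal parts directly with the osculating dilations \eqref{eq:heisenberg-dilation}, in the spirit of \Cref{thm:gauge-degeneration}, but now on $\cN$ with the Heisenberg grading of \Cref{thm:heisenberg-model}.

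Let $\Psi_1,\Psi_2$ be two Darboux realization germs centered at $p$. Each carries $\theta_p$ to $dv$ at the origin, carries $R_\eta$ to $\partial_v$, and induces the fixed symplectic frame of $(\cC_p,\omega_\eta)$ on the $(x,\xi)$-coordinates. Let $F=\Psi_2^{-1}\circ\Psi_1$ be the transition. Since $F$ preserves the contact structure, it preserves $\cC$. By the argument of \Cref{thm:gauge-degeneration}, applied with the weights of \Cref{thm:heisenberg-model}, $F$ has no monomials of negative weight difference. Its Heisenberg principal part
\[
\sigma_{\mathrm H}(F)=\lim_{\epsilon\to0}(\delta^{\mathrm H}_\epsilon)^{-1}F\delta^{\mathrm H}_\epsilon
\]
is therefore a graded automorphism of $\mathfrak h_p$, and it acts by a linear map on $\cC_p[1]$ and by a scalar on $\mathcal Z_p[2]$. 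The weight-one component is $dF_p|_{\cC_p}$, which is the identity because both charts induce the same symplectic frame. The weight-two component is the map induced on $\mathcal Z_p$. It is the identity because both charts send $\theta_p$ to $dv$, or equivalently send $R_\eta$ to $\partial_v$.

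I expect the main obstacle to be the cross term $\partial_v\mapsto\partial_v+(\text{weight-one})$. In principle the principal part could contain a term of the form $v'=v+\text{quadratic}(x,\xi)$. I would rule this out using the graded automorphism property: the principal part must preserve the Heisenberg bracket \eqref{eq:heisenberg-algebra}. A homogeneous automorphism that is the identity on the generating layer $\cC_p$ is then the identity on the brackets $\mathcal Z_p$. The remaining quadratic term is excluded because both charts realize $\theta$ as $dv+\xi\,dx$ to first order, so $F^*\theta'=\lambda\theta$ with $\lambda(p)=1$ and $d\lambda(p)$ fixed by the common jet $\eta$. Since $\eta$ determines $d\theta_p$, the quadratic part of $v'-v$ is $d\theta$-closed and homogeneous, hence zero. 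This is the step where I use the first jet, and not just $\theta_p$, as in \Cref{thm:grading-from-contact-jet}.

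For a change of symplectic frame $S\in\Sp(\cC_p,\omega_\eta)$, I compose one chart with the linear contactomorphism $(x,\xi,v)\mapsto(S(x,\xi),v+\tfrac12 q_S(x,\xi))$. This is the standard homogeneous contact lift, with $q_S$ chosen so that $dv+\xi\,dx$ is preserved. By the first part, every other realization with frame $S$ differs from this one by a transition with identity principal part, so the principal part changes exactly by this lift. Finally, once $x$ is fixed, the requirement that the chart keep $\cH_q$ as a coordinate Lagrangian, needed so that \Cref{thm:incidence-evaluation} is respected, restricts $S$ to the stabilizer of $\cH_q$. This stabilizer is the sky parabolic \eqref{eq:parabolic}.
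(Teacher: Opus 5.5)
Your proposal is correct and follows the paper's own argument: the weight-one part of the transition is fixed by the symplectic frame, the central coefficient by the scale, the homogeneous quadratic $v$-term is killed by the contact equation, a frame change produces its homogeneous contact lift, and the sky condition cuts the frame group down to the parabolic stabilizer. Tighten the quadratic step: it follows from the weight-two part of $F^*\theta'=\lambda\theta$ alone, namely $\sigma_{\mathrm H}(F)^*(dv+\xi\,dx)=\lambda(p)\,(dv+\xi\,dx)$, which gives $dQ=0$ once the frame is the identity and $\lambda(p)=1$. Your appeals to $d\lambda(p)$, $d\theta_p$ and the graded bracket are unnecessary, and neither $d\theta$ nor the bracket can detect $Q$, since $d(dv+\xi\,dx)=d\xi\wedge dx$ does not involve $v$.
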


\begin{proof}
In Darboux coordinates $(x,\xi,v)$, with weights $(1,1,2)$ and contact form
$dv+\xi^Tdx$, the weight-one part of a transition is its derivative on
$\cC_p$, while the central linear part is fixed by the scale and Reeb vector.
A homogeneous quadratic change of $v$ forces, by the contact equation, the
corresponding linear shear of $(x,\xi)$; hence it is already detected by the
symplectic frame.  If the frame is fixed, no nontrivial homogeneous term
remains and the principal part is the identity.  If the frame changes, the
same calculation gives its homogeneous contact lift.  Requiring the reference
$V^*$-Lagrangian to map to the sky gives its stabilizer.
\end{proof}

The value $\theta_p$ trivializes the contact line.  Through the second map in
\eqref{eq:evaluation-graded} and the metric pairing
$\cK_q\otimes\cL_q\to\mathbb R$, it also fixes the scale of the affine tangent
to $\gamma$.  This normalization is independent of the incident point.  To see
this, choose any affine parameter $u$ and put $k_0=d/du$.  If $J$ is a
null-variation Jacobi field, then
\[
 \frac{d}{du}g(J,k_0)=g(\nabla_{k_0}J,k_0)=0.
\]
The evaluation isomorphisms therefore identify $\theta_p$ with one constant
rescaling of $k_0$ along the whole ray.  The first derivative of the scale fixes
the Reeb representative of the degree-two class.  Thus
\Cref{thm:null-cone-reduction} explains the affine scale, while
\Cref{thm:heisenberg-model,thm:grading-from-contact-jet} explain the weighted
realization selected by its first jet.

Extending the evaluation maps \eqref{eq:evaluation-graded} to full
neighborhoods requires higher jets.  For a fixed contact-scale first jet, all
such extensions differ by $\Ghigh$ and define one gauge class.

\begin{corollary}[Phase-space Penrose bridge]
\label{cor:phase-space-bridge}
Pull the Heisenberg tangent model to
$\widehat{\cU}=\cU\times_{\cN}J^1\cS$.  At
$(q,\eta)$ its sky quotient is
\begin{equation}\label{eq:heisenberg-sky-quotient}
 (\cC_p/\cH_q)[1]\oplus\mathcal Z_p[2]
 \xrightarrow{\sim}\cE_q[1]\oplus\cL_q[2].
\end{equation}
The Heisenberg dilation descends to the Penrose dilation on the right.  After
adjoining the weight-zero line $\cK_q$, the curvature symbol on this graded
configuration space is the plane-wave metric $g_{\PW}$ of
\eqref{eq:brinkmann-limit}.
\end{corollary}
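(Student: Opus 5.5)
The corollary follows by combining \Cref{thm:heisenberg-model}, \Cref{thm:incidence-evaluation}, and \Cref{prop:penrose-symbol}; the main work is checking compatibility of gradings and scales.

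\textbf{Step 1: the graded isomorphism.} Start from the graded Heisenberg algebra $\mathfrak h_p=\cC_p[1]\oplus\mathcal Z_p[2]$ of \Cref{thm:heisenberg-model}, pulled back along $\widehat{\cU}\to\cN$. At $(q,\eta)$ the sky Lagrangian $\cH_q\subset\cC_p$ sits in weight one. Regarded as a graded subspace, it is preserved by the intrinsic grading derivation, because that derivation acts by the scalar $1$ on $\cC_p$. The quotient $\mathfrak h_p/\cH_q$ therefore inherits a grading equal to $(\cC_p/\cH_q)[1]\oplus\mathcal Z_p[2]$. Next apply the two isomorphisms \eqref{eq:evaluation-graded}. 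These come from the filtered exact sequence \eqref{eq:evaluation-exact}, so weight one maps to $\cE_q$ and weight two maps to $\cL_q$. This yields \eqref{eq:heisenberg-sky-quotient}.

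\textbf{Step 2: the dilations.} On the left, $\delta^{\mathrm H}_\epsilon$ acts by $\epsilon$ on $\cC_p$ and by $\epsilon^2$ on $\mathcal Z_p$. It preserves $\cH_q$, so it descends to the quotient. The evaluation maps are linear and grade-preserving, so they intertwine this action with $\epsilon\oplus\epsilon^2$ on $\cE_q\oplus\cL_q$, which is \eqref{eq:weighted-dilation}. Note that the graded statement needs no choice of $\eta$. The jet is used only to fix representatives: the Reeb vector $R_\eta$ realizes the class in $\mathcal Z_p$ by \Cref{thm:grading-from-contact-jet}, and the value $\theta_p$ fixes the affine scale.

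\textbf{Step 3: adjoining $\cK_q$ and the metric.} This step carries the real content. We adjoin the weight-zero line $\cK_q$, normalized using $\theta_p$. The normalization comes from the weight-two isomorphism $\mathcal Z_p\cong\cL_q$ together with the pairing $\cK_q\otimes\cL_q\to\mathbb R$. By the constancy of $g(J,k_0)$ noted before \Cref{cor:phase-space-bridge}, this is a single affine scale of $k$ along the whole ray. The result is $\grw(TM|_\gamma)$ of \eqref{eq:full-associated-graded}, carrying the data that \Cref{prop:penrose-symbol} requires: the null flag, the affine scale of $k$, the screen metric and connection on $\cE$, and $\mathcal R_\gamma$.

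That proposition then identifies the weighted degree-two symbol with $g_{\PW}$. The delicate point is matching affine normalizations. If $\theta$ is rescaled by $c$, then $k$ rescales by $c^{\pm1}$ and $A_{ij}$ scales by $c^{\pm2}$. I would check that this change is absorbed by the corresponding rescaling of $v$ and of the identification with $\mathcal Z_p$, so that the symbol obtained is exactly $g_{\PW}$ for the $\eta$-normalized $k$.

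Finally, choices of realization beyond this graded data change the identification only by $\Ghigh$, by \Cref{thm:gauge-degeneration} and \Cref{prop:first-jet-principle}. They therefore leave the symbol unchanged.
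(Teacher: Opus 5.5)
Your proposal is correct and follows the paper's proof: the two graded pieces are the incidence-evaluation isomorphisms \eqref{eq:evaluation-graded}, the weights and dilations come from the Heisenberg grading, the contact scale fixes the affine normalization of $k$, and \Cref{prop:penrose-symbol} identifies the symbol. The rescaling check you flag in Step 3 is not needed, because the corollary is stated at a fixed $\eta$, and there $g_{\PW}$ is by definition computed with the $\eta$-normalized $k$.
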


\begin{proof}
The two components of \eqref{eq:heisenberg-sky-quotient} are the isomorphisms
\eqref{eq:evaluation-graded}.  Their weights are inherited from
the Heisenberg grading.  The null-cone reduction supplies the affine scale,
and \Cref{prop:penrose-symbol} identifies the resulting metric symbol.
\end{proof}

\begin{proposition}[Spacetime extensions of the realized grading]
\label{prop:j1-approximate-extension}
For every $(q,\eta)\in\widehat{\cU}$, the sky quotient of $[D_\eta]$ admits a
local extension to an approximate dilation along the corresponding affinely
normalized null geodesic.  Any two such extensions induce the same Euler
derivation and their renormalized flows differ, in the limit, by an element of
$\Ghigh$.
\end{proposition}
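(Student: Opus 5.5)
The plan is to assemble the statement from \Cref{cor:phase-space-bridge}, \Cref{prop:approximate-coordinate} and \Cref{prop:approximate-dilation}.

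\emph{Step one: the sky quotient as a graded derivation on $W_\gamma$.} Fix $(q,\eta)\in\widehat{\cU}$ with $q=(x,p)$. By \Cref{thm:grading-from-contact-jet}, the class $[D_\eta]$ induces the intrinsic grading derivation on $\mathfrak h_p=\cC_p[1]\oplus\mathcal Z_p[2]$. This derivation preserves the Lagrangian $\cH_q\subset\cC_p$, since it acts by the scalar $1$ on all of $\cC_p$. It therefore descends to the sky quotient \eqref{eq:heisenberg-sky-quotient}, where it acts by $1$ on $\cC_p/\cH_q$ and by $2$ on $\mathcal Z_p$. Transport this through the evaluation isomorphisms \eqref{eq:evaluation-graded} of \Cref{thm:incidence-evaluation}. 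The result is the derivation of weight $1$ on $\cE_q$ and weight $2$ on $\cL_q$, with weight $0$ on the adjoined line $\cK_q$. The scale $\theta_p$ fixes the affine normalization of $k$ along the whole ray, as in the constancy argument following \Cref{prop:first-jet-principle}. Moving $x$ along $\gamma$ then produces the Euler derivation \eqref{eq:euler-derivation} on the associated graded algebra of $W_\gamma$, with $u$ the normalized affine parameter. The Reeb representative $R_\eta$ is what pins the weight-two line. The intrinsic derivation itself does not depend on it, and this gives the first assertion of uniqueness.

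\emph{Step two: existence.} Choose a weighted realization $\Phi$ of $W_\gamma$ (\Cref{def:weighted-realization}). For example, take null Fermi coordinates for the normalized $k$, as in the proof of \Cref{prop:penrose-symbol}. In that chart set $X=x^i\partial_{x^i}+2v\partial_v$. By \Cref{prop:approximate-coordinate}, $X$ is an approximate dilation. It induces exactly the derivation of Step one, because the filtration $\mathcal F^r_\gamma$ is independent of the realization and its associated graded is canonically the polynomial algebra on $W_\gamma$. The existence clause of \Cref{prop:approximate-coordinate} guarantees that any prescribed weighted first jet inducing the Euler derivation extends in this way.

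\emph{Step three: comparing two extensions.} Let $X_1,X_2$ be two extensions. Both induce the same Euler derivation, so in a common adapted chart each has the form \eqref{eq:approximate-coordinate}, with remainders $R^{(1)},R^{(2)}$ of positive weight. \Cref{prop:approximate-dilation} gives limits $H^{(1)}_\infty,H^{(2)}_\infty\in\Ghigh$ of the renormalized flows $\delta_{e^{-t}}^{-1}\Phi^{(j)}_{-t}$. Then
\[
(H^{(1)}_\infty)^{-1}H^{(2)}_\infty\in\Ghigh,
\]
since $\Ghigh=\ker\sigma_w$ is a group by \eqref{eq:gauge-sequence}. This is the claimed difference in the limit. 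If the two extensions were built in different realizations, the chart transition lies in $\Gad$. The matching of their induced derivations then forces this transition to have trivial principal part modulo the fixed frame. The main obstacle is making this last point precise: the $\Gad$ transition changes the limit by $\sigma_w(F)$ (\Cref{thm:gauge-degeneration}), and one must check that equality of the induced Euler derivations and of the affine parameter forces $\sigma_w(F)=\operatorname{id}$. Equality of derivations alone only gives commutation with dilations. So I would use the fixed soldering \eqref{eq:evaluation-graded} together with the scale and Reeb data from $\eta$ to show that the graded identification itself is the identity, which places the transition in $\Ghigh$.
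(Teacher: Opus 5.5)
Your proposal is correct and is essentially the paper's argument: take the sky quotient $x^i\partial_{x^i}+2v\partial_v$ of the Heisenberg Euler field in one weighted realization, use \Cref{prop:approximate-coordinate} to see that any other extension differs only by the allowed higher-weight remainder, apply \Cref{prop:approximate-dilation} in that common chart, and conclude from the fact that $\Ghigh$ is a group.

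The ``obstacle'' about extensions built in different realizations does not need to be addressed: approximate dilations are vector-field germs on $M$, and their coordinate characterization holds in every adapted chart. Your sketched fix would also not work as stated, because the linear soldering \eqref{eq:evaluation-graded} together with the scale and Reeb data of $\eta$ cannot see the quadratic term $Q_u(x)$ of $\sigma_w(F)$; that term is fixed only by the full datum $\operatorname{Sol}_0$.
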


\begin{proof}
Choose a sky-adapted Darboux frame and a weighted tubular realization.  The
quotient of \eqref{eq:heisenberg-euler} is
$x^i\partial_{x^i}+2v\partial_v$; extending it constantly in the affine
$u$-direction gives an approximate dilation.  A second extension has the same
weighted first jet and hence differs by the remainder allowed in
\eqref{eq:approximate-coordinate}.  Apply
\Cref{prop:approximate-dilation} to their flows.
\end{proof}

\section{The bundle of soldered Penrose germs}
\label{sec:soldered-bundle}

Put
\begin{equation}\label{eq:bases}
 \mathcal B=J^1\cS,
 \qquad
 \widehat{\cU}=\cU\times_{\cN}\mathcal B.
\end{equation}
All the bundles in \eqref{eq:incidence-flag} will be pulled back to
$\widehat{\cU}$ without a change of notation.  For fixed
$\eta\in\mathcal B$ over $p=[\gamma]$, the value of the scale fixes an affine
normalization of $\gamma$ and therefore fixes the curvature profile
\eqref{eq:profile}.  Varying $(p,\eta)$ gives a smooth bundle
\begin{equation}\label{eq:pw-bundle}
 \PW(M,g)\longrightarrow\mathcal B
\end{equation}
whose fiber is the plane-wave metric germ on
$\cE[1]\oplus\cL[2]$ along the corresponding null geodesic.  The dependence on
the value of $\eta$ fixes the curvature symbol; its derivative part enters the
realization of the germ.

\subsection{The unpolarized Penrose bundle}

Let $V=\mathbb R^n$ and give $V\oplus V^*$ its standard symplectic form.  Over
$\eta=j^1_p\theta\in\mathcal B$, define $\mathsf P_0$ to be the bundle of
symplectic frames
\begin{equation}\label{eq:symplectic-frame}
 s:V\oplus V^*\xrightarrow{\sim}(\cC_p,\omega_\eta).
\end{equation}
The contact scale identifies the center of the standard Heisenberg algebra
\begin{equation}\label{eq:standard-heisenberg}
 \mathbb H^{\mathrm{std}}=(V\oplus V^*)[1]\oplus\mathbb R_z[2]
\end{equation}
with $\mathcal Z_p[2]$ and represents it by $R_\eta$.  Thus a point of
$\mathsf P_0$ is an unpolarized Heisenberg/Penrose frame, including its realized
central direction.

\begin{proposition}\label{prop:unpolarized-torsor}
The bundle $\mathsf P_0\to\mathcal B$ is a principal
$\Sp(2n,\mathbb R)$-bundle.  Its associated weighted Heisenberg bundle
\begin{equation}\label{eq:unpolarized-model-bundle}
 \mathsf H_0=\mathsf P_0\times_{\Sp(2n,\mathbb R)}
 \mathbb H^{\mathrm{std}}
\end{equation}
is intrinsic over $J^1\cS$ and is independent of a Lagrangian polarization.
\end{proposition}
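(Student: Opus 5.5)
The plan is to verify the principal bundle axioms directly from the fiberwise data supplied by a point $\eta\in\mathcal B$, and then to obtain the associated bundle by the standard construction. The first step is to observe that the map $\eta=j^1_p\theta\mapsto(\cC_p,\omega_\eta)$ defines a smooth symplectic vector bundle of rank $2n$ over $\mathcal B$. The underlying vector bundle is the pullback of $\cC\subset T\cN$ along $\mathcal B\to\cN$. By \eqref{eq:reeb}, $\omega_\eta=d\theta_p|_{\cC_p}$ depends only on the first jet, and it depends smoothly on it, because $d\theta_p|_{\cC_p}$ is a linear function of the $1$-jet coordinates. It is nondegenerate because $\cC$ is a contact distribution (\Cref{thm:null-cone-reduction}). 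Rescaling $\theta$ by a positive function $f$ changes $\omega_\eta$ to $f(p)\,\omega_\eta$ on $\cC_p$, since the term $df\wedge\theta$ vanishes there. So the form really varies with $\eta$ and is not just a conformal class; this is why the base is $J^1\cS$ and not $\cN$.

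Next, $\mathsf P_0$ is by definition the symplectic frame bundle of this symplectic vector bundle. For local triviality, I would use the Darboux coordinates \eqref{eq:contact-darboux} attached to a local contact form $\theta_0$. A general $\eta$ over a nearby point has value $f(p)\theta_{0,p}$. The frame $(\partial_{x^i}+\cdots,\partial_{\xi_i})$ of $\cC$, rescaled by $f(p)^{-1/2}$ and corrected by a smooth Gram--Schmidt symplectic normalization, gives a smooth local section of $\mathsf P_0$ over an open set of $\mathcal B$. The group $\Sp(2n,\mathbb R)$ acts on the right by precomposition $s\mapsto s\circ A$. This action is free and transitive on each fiber, because any two symplectic isomorphisms $V\oplus V^*\to\cC_p$ differ by a unique element of $\Sp(V\oplus V^*)$. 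Together with the local sections, this gives the principal bundle structure.

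For the associated bundle, $\Sp(2n,\mathbb R)$ acts on $\mathbb H^{\mathrm{std}}$ by graded Lie algebra automorphisms: $A$ acts on the degree-one part and trivially on $\mathbb R_z$. This is compatible with the bracket $[X,Y]=-\omega(X,Y)z$ exactly because $A$ preserves $\omega$, and the action commutes with the dilations. Hence $\mathsf H_0$ is a bundle of graded Heisenberg algebras. The key step is to show that the map $[s,(X,cz)]\mapsto(s(X),\,c\,[R_\eta])$ is a well-defined isomorphism onto the graded algebra $\mathfrak h_p=\cC_p[1]\oplus\mathcal Z_p[2]$ of \Cref{thm:heisenberg-model}. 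It is well defined because $R_\eta$ is determined by $\eta$ alone, and $\theta_p(R_\eta)=1$ fixes the central identification. It preserves brackets because $\theta_p([X,Y]_{\mathrm{gr}})=-d\theta_p(X,Y)=-\omega_\eta(X,Y)$. So $\mathsf H_0$ is canonically the pullback of the Heisenberg tangent bundle, normalized by $\eta$.

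Independence of polarization is then immediate: no Lagrangian splitting of $V\oplus V^*$ was used, only the full symplectic group. The decomposition $V\oplus V^*$ only fixes the model, and it is absorbed by the structure group. The step I expect to need most care is the smooth dependence on $\eta$ and the check that the Reeb representative is consistent. Concretely, $R_\eta$ is determined by $\theta_p$ and $d\theta_p$, so it depends only on the first jet. Two contact forms with the same $1$-jet give the same $(\omega_\eta,R_\eta)$, which is consistent with \Cref{thm:grading-from-contact-jet}. This is what makes the bundle intrinsic over $J^1\cS$ and not over the space of contact forms.
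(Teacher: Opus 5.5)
Your proof is correct, and its core is the same as the paper's. The paper's argument is exactly that once $\eta$ fixes $\omega_\eta$ and the central scale, two frames differ by a unique element of $\Sp(V\oplus V^*)$, so $\mathsf P_0$ is principal and $\mathsf H_0$ is the associated bundle.

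You differ in how you justify ``intrinsic over $J^1\cS$''. The paper cites \Cref{prop:first-jet-principle}, a statement about transitions between Darboux realization germs: it asserts that the realized Heisenberg principal part depends only on $j^1\theta$ and the frame. This is the form in which it is reused in \Cref{thm:canonical-soldering}(b). You instead read the jet dependence directly off \eqref{eq:reeb}. You then exhibit the explicit isomorphism $[s,(X,cz)]\mapsto(s(X),c[R_\eta])$ of $\mathsf H_0$ with the pulled-back graded algebra $\mathfrak h_p$ of \Cref{thm:heisenberg-model}. Your version is more self-contained: it also supplies local triviality, which the paper omits, and it makes the intrinsic meaning of $\mathsf H_0$ explicit. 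The paper's version ties the linear frames to realized Darboux charts, which is what the soldering theorem needs later.

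One small correction to your motivation. Your rescaling computation $d(f\theta)|_{\cC_p}=f(p)\,d\theta_p|_{\cC_p}$ shows that $\omega_\eta$ depends only on the value $\theta_p$. The same holds for the class $[R_\eta]\in\mathcal Z_p$, which is fixed by $\theta_p([R_\eta])=1$. So that computation explains why the base cannot be $\cN$, but not why it must be $J^1\cS$. The first jet is needed only for the representative $R_\eta\in T_p\cN$: replacing $\theta$ by $f\theta$ with $f(p)=1$ moves $R_\eta$ by a vector in $\cC_p$ determined by $df_p|_{\cC_p}$. You do say this correctly at the end. In your identification with $\mathfrak h_p$, where only $[R_\eta]$ enters, the resulting bundle is in fact pulled back from $\cS$, which a fortiori makes it intrinsic over $J^1\cS$.
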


\begin{proof}
Once $\eta$ fixes $\omega_\eta$ and the central scale, two frames differ
uniquely by a symplectic automorphism of $V\oplus V^*$.  The associated-bundle
statement follows.  The first-jet independence is
\Cref{prop:first-jet-principle}.
\end{proof}

The bundle $\mathsf H_0$ is unpolarized.  Incidence reduces it by the sky
Lagrangian; a Rosen polarization is an additional choice.

\subsection{The sky-parabolic bundle}

We next describe the finite-dimensional structure group of the principal gauge
over incidence.  After pullback to $\widehat{\cU}$, the sky Lagrangian defines
the reduction
\begin{equation}\label{eq:sky-reduction}
 \mathsf P_{\mathrm{sky}}
 =\{s\in\mathsf P_0:s(V^*)=\cH_q\}.
\end{equation}
Its structure group is the stabilizer of the \emph{$V^*$-Lagrangian}, namely
\begin{equation}\label{eq:parabolic}
 P=\left\{
 \begin{pmatrix}A&0\\ C&A^{-T}\end{pmatrix}:
 A\in\operatorname{GL}(V),\ A^TC=(A^TC)^T
 \right\}\subset\Sp(V\oplus V^*).
\end{equation}
Writing $B=-A^TC$, so that $B=B^T$, its contact lift in Darboux coordinates is
\begin{equation}\label{eq:contact-parabolic-action}
 x'=Ax,
 \qquad
 \xi'=A^{-T}(\xi-Bx),
 \qquad
 v'=v+\tfrac12x^TBx.
\end{equation}
Indeed $dv'+(\xi')^Tdx'=dv+\xi^Tdx$.  Formula
\eqref{eq:contact-parabolic-action} also fixes the sign of the quadratic term,
which is otherwise easy to obscure in an abstract parabolic notation.

Taking the sky-quotient configuration variables $(x,v)$ gives the associated
bundle of weighted homogeneous model germs
\begin{equation}\label{eq:configuration-bundle}
 \mathsf Q=\mathsf P_{\mathrm{sky}}
 \times_P\bigl(V[1]\oplus\mathbb R[2]\bigr),
\end{equation}
where $P$ acts by the $(x,v)$ part of
\eqref{eq:contact-parabolic-action}.  Because the $v$-transition includes a
quadratic term, $\mathsf Q$ is naturally a weighted bundle.  The
$P$-equivalence class of its homogeneous charts
is the full weighted principal soldering datum; denote it by
$\operatorname{Sol}_0$.  This datum includes the quadratic $v$-shift in
\eqref{eq:contact-parabolic-action}.  Full neighborhood maps require
higher-order data.  The linear associated graded is ordinary, and incidence
evaluation gives
\begin{equation}\label{eq:graded-soldering}
 \operatorname{Sol}_{\mathrm{gr}}:
 \grw(\mathsf Q)\xrightarrow{\sim}\cE[1]\oplus\cL[2]=:\cW.
\end{equation}
The linear map $\operatorname{Sol}_{\mathrm{gr}}$ omits the quadratic term
$Q_u(x)$ in \eqref{eq:principal-coordinate-form}.  Changes of this term belong
to the parabolic principal gauge.  Although a general element of $G_0$ may
contain $v\mapsto a(u)v$, $\operatorname{Sol}_0$ includes the scaled
identification of the central line with $\cL[2]$.  Preserving this identification
forces $a(u)=1$.  The remaining homogeneous frame changes are the contact lifts
of $P$ and are already absorbed in the associated bundle $\mathsf Q$.  Thus the
relative principal part of two neighborhood extensions of the same intrinsic
$\operatorname{Sol}_0$ is the identity.  The $\Ghigh$-torsor is defined only
after this complete datum has been fixed.

\subsection{Local transitions and the principal cocycle}

Choose local sections $s_\alpha$ of $\mathsf P_{\mathrm{sky}}$.  On an overlap
they differ by a unique smooth map $g_{\alpha\beta}$ with values in $P$,
\begin{equation}\label{eq:parabolic-cocycle}
 g_{\alpha\beta}
 =\begin{pmatrix}A_{\alpha\beta}&0\\
 C_{\alpha\beta}&A_{\alpha\beta}^{-T}\end{pmatrix},
 \qquad
 B_{\alpha\beta}=-A_{\alpha\beta}^TC_{\alpha\beta}
 =B_{\alpha\beta}^T,
\end{equation}
and $g_{\alpha\beta}g_{\beta\gamma}=g_{\alpha\gamma}$ on triple
overlaps.  The corresponding fiber coordinates obey
\begin{equation}\label{eq:local-principal-transitions}
 x_\beta=A_{\alpha\beta}x_\alpha,
 \qquad
 \xi_\beta=A_{\alpha\beta}^{-T}
 (\xi_\alpha-B_{\alpha\beta}x_\alpha),
 \qquad
 v_\beta=v_\alpha+\tfrac12x_\alpha^TB_{\alpha\beta}x_\alpha.
\end{equation}
These are fiberwise formulas over $\widehat{\cU}$; their coefficients may vary
smoothly with the incident point, including along the affine $u$-direction.
They commute with the weights $(1,1,2)$ and therefore constitute the complete
homogeneous principal cocycle.  Passing to $(x,v)$ gives the transition
functions of $\mathsf Q$.

If local weighted tubular realizations are appended to these homogeneous
charts, their transitions have the fuller form
\begin{equation}\label{eq:full-local-transitions}
 u'=u+O_w(1),\qquad
 x'=A(u)x+O_w(2),\qquad
 v'=v+\tfrac12x^TB(u)x+O_w(3).
\end{equation}
The first line of data is the parabolic cocycle
\eqref{eq:local-principal-transitions}; the remainders belong to the
higher-order realization problem.  This is the local content of the exact
sequence \eqref{eq:gauge-sequence}.

\subsection{Rosen reductions}
\label{sec:rosen-reductions}

The sky Lagrangian $\cH_q$ is the kernel of evaluation.  A Rosen gauge requires
a complementary Lagrangian $\Lambda\subset\cC_p$ satisfying
\begin{equation}\label{eq:two-lagrangians}
 \cC_p=\Lambda\oplus\cH_q.
\end{equation}
Introduce the optional polarized base
\begin{equation}\label{eq:rosen-base}
 \mathcal B_R=J^1\cS\times_{\cN}\operatorname{LG}(\cC)
\end{equation}
and pull it to incidence.  On the open locus where
\eqref{eq:two-lagrangians} holds, frames satisfying
\begin{equation}\label{eq:double-lagrangian-frame}
 s(V^*)=\cH_q,
 \qquad s(V)=\Lambda
\end{equation}
form a principal $\operatorname{GL}(V)$-bundle.  A Rosen polarization is
therefore a reduction of the sky-parabolic bundle to the Levi subgroup
$\operatorname{GL}(V)\subset P$, whose inclusion is
\begin{equation}\label{eq:rosen-reduction}
 \operatorname{GL}(V)\hookrightarrow P.
\end{equation}

The quotient $P/\operatorname{GL}(V)$ is the affine space of Lagrangian
complements to $V^*$, modeled on $\operatorname{Sym}^2(V^*)$.  In
\eqref{eq:local-principal-transitions}, the symmetric matrix $B$ records the
change of complement and the associated quadratic shift of $v$.  The three
levels are:
\begin{enumerate}[label=(\roman*)]
\item $\mathsf P_0$ is the intrinsic unpolarized symplectic bundle;
\item $\mathsf P_{\mathrm{sky}}$ is the canonical parabolic reduction over
      incidence;
\item a transverse $\Lambda$ reduces it further and supplies a Rosen
      realization until a caustic is reached.
\end{enumerate}

\subsection{Associated models over incidence}

Let $\mathscr K=\ker d\pi\subset T\cU$ be the line tangent to the geodesic
fibers.  The differential $d\rho$ identifies it with the tautological ambient
null line $\cK$.  Pulling \eqref{eq:unpolarized-model-bundle} to
$\widehat{\cU}$ gives the phase-space family
\begin{equation}\label{eq:pulled-heisenberg-model}
 \widehat{\mathsf H}_0
 =\widehat{\cU}\times_{\mathcal B}\mathsf H_0.
\end{equation}
The sky quotient of its weight-one layer, together with its central line, is
the configuration bundle $\mathsf Q$.  Incidence evaluation therefore extends
\eqref{eq:graded-soldering} to the full graded tangent along the ray:
\begin{equation}\label{eq:full-graded-soldering}
 \widetilde{\operatorname{Sol}}_{\mathrm{gr}}:
 \mathscr K[0]\oplus\grw(\mathsf Q)
 \xrightarrow{\sim}
 \cK[0]\oplus\cE[1]\oplus\cL[2].
\end{equation}
The affine coordinate $u$ belongs to $\mathscr K$, hence to the incidence
family.  Together with the $n+1$ weighted configuration directions it accounts
for the full spacetime dimension.

\begin{proposition}[Coordinate form of the soldering]
\label{prop:coordinate-soldering}
Let $\Phi$ be a local weighted realization of the complete principal datum
$\operatorname{Sol}_0$.  Then
\begin{equation}\label{eq:coordinate-symbol-remainder}
 \Phi^*g=g_{\PW}+h,
 \qquad
 \epsilon^{-2}\delta_\epsilon^*h\longrightarrow0
 \quad\text{in }C^\infty_{\mathrm{loc}}.
\end{equation}
If $\widetilde\Phi$ is an arbitrary second adapted realization and
$F=\Phi^{-1}\widetilde\Phi$, then its limiting representative is
$\sigma_w(F)^*g_{\PW}$.  If $\widetilde\Phi$ realizes the same
$\operatorname{Sol}_0$, then $\sigma_w(F)=\operatorname{id}$ and the two
limits agree.
\end{proposition}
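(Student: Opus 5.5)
The plan is to reduce everything to the Penrose symbol and gauge degeneration results already established. First, fix a local weighted realization $\Phi$ of $\operatorname{Sol}_0$. By definition its normal derivative respects the null filtration, and its linear associated graded is the incidence soldering \eqref{eq:full-graded-soldering}. Its homogeneous quadratic $v$-part is the one recorded in $\operatorname{Sol}_0$, and the central identification is scaled by $\theta_p$. In particular, the affine parameter on the zero section is the one fixed by the contact scale. Hence $\Phi$ differs from a null Fermi realization $\Psi$ with the same affine scale, screen frame and complementary null vector by some $F_0=\Psi^{-1}\Phi\in\Gad$. By \Cref{prop:penrose-symbol}, $\Psi^*g$ has weighted degree-two part $g_{\PW}$. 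I would arrange, using the fact that the parabolic principal gauge is absorbed in $\mathsf Q$, that the homogeneous chart defining $\operatorname{Sol}_0$ is the one induced by the Fermi frame. Then $\sigma_w(F_0)=\operatorname{id}$. \Cref{lem:weighted-functoriality} then gives $\grw(\Phi^*g)=g_{\PW}$.

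Second, I convert this statement about the associated graded into the analytic statement \eqref{eq:coordinate-symbol-remainder}. Set $h=\Phi^*g-g_{\PW}$. Every component of $h$ has weighted degree strictly larger than two, counting $dx$ as weight one, $dv$ as weight two and $du$ as weight zero. Under $\epsilon^{-2}\delta_\epsilon^*$, a Taylor monomial of total weight $w$ is multiplied by $\epsilon^{w-2}$, exactly as in \eqref{eq:scaled-metric}. To control the non-polynomial remainder I would use Taylor's formula with integral remainder in $(x,v)$ up to a fixed weighted order. The remainder terms carry a positive power of $\epsilon$ times smooth functions of $(u,\epsilon x,\epsilon^2 v)$, and these are uniformly bounded with all derivatives on compact sets. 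This yields $C^\infty_{\mathrm{loc}}$ convergence to $0$. This is the same bookkeeping used in \Cref{thm:gauge-degeneration}.

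Third, for a second adapted realization $\widetilde\Phi$, I write $\widetilde\Phi^*g=F^*\Phi^*g$ and use the identity
$\epsilon^{-2}\delta_\epsilon^*(F^*\Phi^*g)=F_\epsilon^*(\epsilon^{-2}\delta_\epsilon^*\Phi^*g)$.
By step two the inner factor converges to $g_{\PW}$, and by \Cref{thm:gauge-degeneration} $F_\epsilon\to\sigma_w(F)$ in $C^\infty$ on compacts. Joint continuity of pullback in $C^\infty_{\mathrm{loc}}$ then gives the limit $\sigma_w(F)^*g_{\PW}$. If $\widetilde\Phi$ also realizes $\operatorname{Sol}_0$, then $\sigma_w(F)$ is a homogeneous map of the form \eqref{eq:principal-coordinate-form}. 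It preserves the linear soldering \eqref{eq:graded-soldering}, so $A(u)=\operatorname{id}$. It preserves the scaled central identification, so $a(u)=1$. It preserves the quadratic datum in $\operatorname{Sol}_0$, so $Q_u=0$, as argued after \eqref{eq:graded-soldering}. Hence $\sigma_w(F)=\operatorname{id}$ and the two limits agree.

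The main obstacle is the first step. There I must check that realizing $\operatorname{Sol}_0$ really pins down the quadratic term $Q_u$ and the $u$-dependence of the frame in a way compatible with a null Fermi chart. Otherwise $\sigma_w(F_0)$ could contain a nontrivial parabolic shear, and then $\grw(\Phi^*g)$ would only be a gauge transform of $g_{\PW}$. I would first try to handle this by choosing the local section of $\mathsf P_{\mathrm{sky}}$ induced by a parallel screen frame, transported through the incidence evaluation. The $\operatorname{Sol}_0$-class is intrinsic, so any other representative differs by a $P$-transition that is already quotiented out in $\mathsf Q$.
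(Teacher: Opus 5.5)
Your proposal is correct and follows the paper's proof. For the first assertion, the weighted symbol is identified by \Cref{prop:penrose-symbol}; you make explicit the null Fermi comparison and the weighted Taylor-remainder estimate that the paper folds into ``the definition of the weighted metric symbol.'' For the second, you use \Cref{thm:gauge-degeneration} and \Cref{lem:weighted-functoriality}, and, exactly as in the paper, the scaled central identification forces $a(u)=1$ while the quadratic datum forces $\sigma_w(F)=\operatorname{id}$.

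One small caution about your fallback in step one. A parallel screen frame fixes a section of $\mathsf P_{\mathrm{sky}}$ only up to the symmetric shear $B$ in \eqref{eq:contact-parabolic-action}. So the Fermi-compatible chart must also be chosen to match the quadratic $v$-term. This is always possible, because the lift of $P$ realizes every $u$-dependent shear $v\mapsto v+\tfrac12x^TB(u)x$.
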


\begin{proof}
The first assertion is the definition of the weighted metric symbol together
with \Cref{prop:penrose-symbol}.  For the second, apply
\Cref{thm:gauge-degeneration,lem:weighted-functoriality}.  Agreement on the
scaled weight-two identification forces $a(u)=1$.  Agreement on the remaining
complete principal datum, including its quadratic $v$-term, then gives
$\sigma_w(F)=\operatorname{id}$, which is equivalent to $F\in\Ghigh$.
\end{proof}

\subsection{The canonical soldering theorem}

Let $\operatorname{Real}(\mathsf Q,M)$ denote the bundle of germs of weighted
realizations whose complete weighted principal datum is
$\operatorname{Sol}_0$.  The group $\Ghigh$ acts fiberwise on these
embedding-germ representatives.

\begin{theorem}[Canonical soldering theorem]\label{thm:canonical-soldering}
Let $(M,g)$ be a Lorentzian manifold for which a cooriented local null-geodesic
space $\cN$ is smooth.  Then the following objects are canonical and natural
under local isometries.
\begin{enumerate}[label=(\alph*)]
\item The Penrose plane-wave germs form the smooth bundle
      $\PW(M,g)\to J^1\cS$ of \eqref{eq:pw-bundle}.
\item The realized Heisenberg frames form the intrinsic unpolarized principal
      bundle $\mathsf P_0\to J^1\cS$, with associated weighted model
      $\mathsf H_0$.  No polarization is required for these objects.
\item Over $\widehat{\cU}=\cU\times_{\cN}J^1\cS$, the sky-parabolic bundle
      defines the weighted principal datum $\operatorname{Sol}_0$, including
      its homogeneous quadratic $v$-gauge.  Incidence evaluation identifies
      its linear associated graded by the canonical soldering
      $\operatorname{Sol}_{\mathrm{gr}}$ in
      \eqref{eq:graded-soldering}; adjoining the incidence-fiber line gives
      $\widetilde{\operatorname{Sol}}_{\mathrm{gr}}$ in
      \eqref{eq:full-graded-soldering}.
\item The full realizations extending the same complete principal datum
      $\operatorname{Sol}_0$ form a
      $\Ghigh$-torsor
      \begin{equation}\label{eq:realization-torsor}
       \operatorname{Real}(\mathsf Q,M)
       \longrightarrow\widehat{\cU}.
      \end{equation}
      The canonical datum is their orbit $[\operatorname{Sol}]$; no member of
      this orbit is distinguished.
\item For every representative $\Phi$ of $[\operatorname{Sol}]$, the ambient
      metric has Penrose symbol $g_{\PW}$:
      \begin{equation}\label{eq:main-limit}
       \lim_{\epsilon\to0}
       \epsilon^{-2}\delta_\epsilon^*(\Phi^*g)=g_{\PW}.
      \end{equation}
      Every representative gives this limit.  The flow of an approximate
      dilation may instead give $H^*g_{\PW}$ for $H\in\Ghigh$, in the same
      gauge class.
\end{enumerate}
Thus the bundle of Penrose-limit germs is canonically soldered into spacetime
as a gauge class of weighted embedding germs.  The canonical map to the
ambient null flag is the associated-graded isomorphism in (c), and the
neighborhood-level maps form the torsor in (d).  No canonical isometric
embedding of a Penrose plane wave into $M$ is asserted.
\end{theorem}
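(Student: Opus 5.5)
The plan is to assemble the theorem item by item from the results already established, checking smoothness and naturality at each stage. Naturality under local isometries will be uniform: a local isometry $\psi$ of $(M,g)$ maps null geodesics to null geodesics and Jacobi fields to Jacobi fields. It therefore induces a contactomorphism of $\cN$ preserving $\cC$, and lifts to $\cU$, $\cS$, $J^1\cS$ and $\widehat{\cU}$. Every construction below uses only the metric, the null flag, Jacobi evaluation, and the jet $\eta$, so each construction commutes with these induced maps.

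For (a), fix $\eta=j^1_p\theta$. By \Cref{thm:null-cone-reduction} and the subsequent discussion, $\theta_p$ fixes one constant affine normalization $k$ along the whole ray. \Cref{prop:penrose-symbol} then says that the weighted symbol $g_{\PW}$ depends only on the flag, this scale, the screen connection and metric, and $\mathcal R_\gamma$. Smooth dependence of geodesics and curvature on initial data gives smoothness of \eqref{eq:pw-bundle} in $(p,\eta)$. Item (b) is \Cref{prop:unpolarized-torsor} together with the first-jet principle, \Cref{prop:first-jet-principle}. Local triviality of $\mathsf P_0$ follows from smooth local Darboux frames for $\omega_\eta$, which depend smoothly on $\eta$.

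For (c), the sky Lagrangian $\cH_q$ varies smoothly on $\widehat{\cU}$, since it is $d\pi(\ker d\rho)$. Hence \eqref{eq:sky-reduction} is a smooth reduction to $P$, because $\Sp$ acts transitively on Lagrangian subspaces with stabilizer $P$. The datum $\operatorname{Sol}_0$ is the $P$-class of homogeneous charts of $\mathsf Q$, with transitions \eqref{eq:local-principal-transitions}. \Cref{thm:incidence-evaluation} supplies the canonical isomorphisms \eqref{eq:evaluation-graded}, which assemble into \eqref{eq:graded-soldering}. Adjoining $\mathscr K\cong\cK$ via $d\rho$ gives \eqref{eq:full-graded-soldering}. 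The point to verify is $P$-equivariance. The linear parts $A$ and $A^{-T}$ act on $\cC_p/\cH_q$ compatibly with evaluation, while the quadratic shift in $v$ acts trivially on the linear associated graded. Hence $\operatorname{Sol}_{\mathrm{gr}}$ descends to the associated bundle.

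Item (d) is where I expect the main work to lie. There are two things to show: that $\operatorname{Real}(\mathsf Q,M)$ is nonempty, and that $\Ghigh$ acts freely and transitively on it. For existence, I will take Penrose-adapted or null Fermi coordinates about the ray, normalized by $k$. By \Cref{prop:j1-approximate-extension}, these realize the graded data. A homogeneous correction by an element of $G_0$ then matches the prescribed $\operatorname{Sol}_0$ exactly, including $a(u)=1$ and the quadratic term. For transitivity, let $\Phi,\widetilde\Phi$ both realize $\operatorname{Sol}_0$. \Cref{prop:coordinate-soldering} gives $\sigma_w(\Phi^{-1}\widetilde\Phi)=\mathrm{id}$, so $F\in\Ghigh$. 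Conversely, $\Phi\circ F$ with $F\in\Ghigh$ has the same principal datum, by functoriality of $\sigma_w$ from \eqref{eq:principal-part}. Freeness is immediate, since diffeomorphism germs compose faithfully. The subtle point is doing this in families over $\widehat{\cU}$. I will build local realizations via a parametrized Fermi construction, depending smoothly on $(q,\eta)$. I will then note that no section is distinguished, because $\Ghigh$ acts transitively and every ingredient fixed by the data is homogeneous.

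Finally, (e) follows from \eqref{eq:coordinate-symbol-remainder}: $\epsilon^{-2}\delta_\epsilon^*(\Phi^*g)=g_{\PW}+\epsilon^{-2}\delta_\epsilon^*h\to g_{\PW}$, using $\delta_\epsilon^*g_{\PW}=\epsilon^2g_{\PW}$. The limit is independent of the representative by \Cref{thm:gauge-degeneration}, since $\sigma_w=\mathrm{id}$ on $\Ghigh$. The statement about approximate dilations is \Cref{prop:approximate-dilation}, which yields $H_\infty^*g_{\PW}$ with $H_\infty\in\Ghigh$.
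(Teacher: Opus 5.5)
Your proposal is correct and follows the paper's own proof essentially item by item. Items (a)--(c) come from \Cref{prop:penrose-symbol}, \Cref{prop:first-jet-principle} and \Cref{thm:incidence-evaluation}; (d) from the fact that two realizations of the same $\operatorname{Sol}_0$ differ by a germ with identity principal part; and (e) from \Cref{thm:gauge-degeneration} and \Cref{prop:approximate-dilation}. Your extra steps (explicit naturality, the $P$-equivariance check, and existence via a null Fermi chart corrected by a homogeneous element of $G_0$) only flesh out what the paper asserts, and the appeal to \Cref{prop:j1-approximate-extension} in the existence step is unnecessary, since the Fermi chart is already a weighted realization because its frame is adapted to the null flag.
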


\begin{proof}
The null line, screen quotient, curvature endomorphism, and induced screen
connection vary smoothly on $\cU$.  A contact-scale value normalizes the affine
tangent, so \Cref{prop:penrose-symbol} produces the smooth family in (a).

For (b), \Cref{thm:grading-from-contact-jet,prop:first-jet-principle} show that
the Heisenberg frame construction depends only on $j^1\theta$; the ordinary
symplectic frame argument gives the principal $\Sp(2n,\mathbb R)$-action.

For (c), the sky-parabolic construction and its contact lift define
$\operatorname{Sol}_0$ independently of a symplectic frame: a frame change
acts on both descriptions by \eqref{eq:contact-parabolic-action}.  Its
weight-one quotient is $\cC_p/\cH_q$ and its central weight-two line is
$T_p\cN/\cC_p$.  The two canonical isomorphisms of
\Cref{thm:incidence-evaluation} give the linearized map
$\operatorname{Sol}_{\mathrm{gr}}$.

Weighted tubular realizations with prescribed weighted principal data exist
locally.  If $\Phi_1$ and $\Phi_2$ are two which extend the same
$\operatorname{Sol}_0$, then $F=\Phi_1^{-1}\Phi_2$ has identity weighted
principal part;
hence $F\in\Ghigh$.  Conversely, composition with any $F\in\Ghigh$ preserves
the principal part.  This action is free and transitive on germs, proving (d).
Finally (e) follows from \Cref{prop:penrose-symbol,thm:gauge-degeneration}; the
last clause follows from \Cref{prop:approximate-dilation}.
\end{proof}

\subsection{Classical representatives and what they choose}

Null-Fermi and Rosen coordinates select representatives of the bundle above.
A null-Fermi construction chooses an affine
parameter, a parallel orthonormal screen frame, a complementary null vector,
and a tubular extension.  It thereby produces a Brinkmann representative of
the weighted realization class, in which the profile is the curvature matrix
\eqref{eq:profile}.  Changing the screen frame acts through the surviving
homogeneous frame gauge; changing the tubular extension while holding its
complete principal part fixed acts through $\Ghigh$.  Thus the covariant
null-Fermi expansion of \cite{Blau2006} computes a representative of the
symbol and its higher weighted corrections, whereas the soldering theorem
identifies the orbit to which that representative belongs.

A Rosen construction makes an additional choice.  It selects a
Lagrangian $\Lambda$ transverse to the moving sky and integrates the
corresponding Jacobi fields to an adapted null congruence.  The Lagrangian
choice is the Levi reduction $\operatorname{GL}(V)\hookrightarrow P$ of
\eqref{eq:rosen-reduction}; changing that complement produces the symmetric
shear and quadratic $v$-shift in
\eqref{eq:contact-parabolic-action}.  Extending the same homogeneous Rosen
data away from the ray again introduces only $\Ghigh$-freedom.  Hence
Brinkmann and Rosen charts represent the same weighted plane-wave germ.  At a
caustic the Rosen chart fails while the unpolarized bundle remains smooth.

The resulting hierarchy is
\begin{equation}\label{eq:representative-hierarchy}
 \text{weighted realization germ}
 \longmapsto \operatorname{Sol}_0
 \longmapsto \operatorname{Sol}_{\mathrm{gr}}.
\end{equation}
The first arrow forgets the $\Ghigh$-extension, while the second forgets the
homogeneous quadratic $v$-term.  The theorem canonically determines the
$\Ghigh$-orbit over the middle object and the final graded isomorphism; a
coordinate gauge is a choice made above them.

\section{Rosen gauges, caustics, and Legendrian families}
\label{sec:rosen-legendrian}

The unpolarized theorem is global on every local smooth patch of $\cN$.  A
Rosen presentation is more local because it requires a Lagrangian complement
to the moving sky.  Such presentations are also the natural form for several
field-propagation and Hamilton--Jacobi constructions in the preceding work
\cite{HollandSparlingIV,HollandSparlingV}.

\subsection{Rosen transversality}

Fix $p=[\gamma]$ and a Lagrangian subspace
$\Lambda\subset\cC_p$.  Along an interval of $\gamma$ define
\begin{equation}\label{eq:rosen-transverse-locus}
 I_\Lambda=\{x\in\gamma:\Lambda\cap\cH_{(x,p)}=0\}.
\end{equation}

\begin{proposition}[Rosen locus]\label{prop:rosen-locus}
On $I_\Lambda$, incidence evaluation identifies
$\Lambda\cong\cC_p/\cH_{(x,p)}\cong\cE_{(x,p)}$.  A basis
$J_1,\ldots,J_n$ of $\Lambda$, regarded as transverse Jacobi fields, gives the
Rosen metric
\begin{equation}\label{eq:rosen-from-lagrangian}
 g_R=2\,du\,dv+h_{ij}(u)\,dy^i dy^j,
 \qquad h_{ij}(u)=g(J_i(u),J_j(u)).
\end{equation}
This coordinate realization degenerates exactly when
$\Lambda\cap\cH_{(x,p)}\ne0$.
\end{proposition}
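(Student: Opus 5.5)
The plan is to prove the three assertions in order, relying on \Cref{thm:incidence-evaluation}, on the identification \eqref{eq:contact-transverse-jacobi} of $\cC_p$ with $\mathcal J_\gamma^\perp$, and on the Rosen construction of \Cref{sec:transverse-jacobi}.

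\emph{First assertion.} The weight-one part of \eqref{eq:evaluation-graded} says that $\operatorname{ev}_q$ restricted to $\cC_p$ has kernel $\cH_{(x,p)}$ and is onto $\cE_{(x,p)}$. Hence the restriction of $\operatorname{ev}_q$ to a Lagrangian $\Lambda$ is injective exactly when $\Lambda\cap\cH_{(x,p)}=0$. Both $\Lambda$ and $\cE_{(x,p)}$ have dimension $n$, so on $I_\Lambda$ this restriction is an isomorphism. It factors as $\Lambda\to\cC_p/\cH_{(x,p)}\to\cE_{(x,p)}$, which gives the stated chain.

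\emph{Second assertion.} Transport $\Lambda$ through \eqref{eq:contact-transverse-jacobi} to a Lagrangian subspace of $\mathcal J^\perp_\gamma$, using that the Levi form is a constant multiple of the Wronskian. Its basis $J_1,\dots,J_n$ then consists of transverse Jacobi fields whose Wronskians vanish pairwise. Since the Wronskian is constant along $\gamma$, it is enough to know this at one point.

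Now build coordinates as in the Rosen construction of \Cref{sec:transverse-jacobi}. Take $u$ affine (normalized by the contact scale) and $v$ the coordinate of the weight-two line. Use the frame $J_i(u)$ of $E_u$ to define transverse labels $y^i$, so that $\Phi(u,y,v)$ is realized at the level of the weighted normal bundle $W_\gamma$ by $x=y^iJ_i(u)$.

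Pulling back the Brinkmann symbol \eqref{eq:brinkmann-limit}, written in a parallel orthonormal frame, then produces three pieces:
\begin{itemize}
\item[(i)] $h(J_i,J_j)\,dy^idy^j$ from the screen part;
\item[(ii)] cross terms $2h(\nabla_kJ_i,J_j)\,y^i\,du\,dy^j$;
\item[(iii)] a $du^2$ term $\bigl(h(\nabla_kJ_i,\nabla_kJ_j)-h(\mathcal R_\gamma J_i,J_j)\bigr)y^iy^j$, using the sign $A=-\mathcal R_\gamma$.
\end{itemize}
The Lagrangian condition makes $S_{ij}=h(\nabla_kJ_i,J_j)$ symmetric. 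Therefore the cross terms equal $d\bigl(\tfrac12 S_{ij}y^iy^j\bigr)\,du$ minus $\tfrac12 S_{ij}'y^iy^j\,du^2$. Using the Jacobi equation \eqref{eq:jacobi}, $S'_{ij}=h(\nabla_kJ_i,\nabla_kJ_j)-h(\mathcal R_\gamma J_i,J_j)$, so the $du^2$ terms cancel exactly.

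The leftover exact piece is removed by the quadratic shift $v\mapsto v-\tfrac12 S_{ij}y^iy^j$. This is precisely the parabolic shift in \eqref{eq:contact-parabolic-action}, so by \Cref{prop:coordinate-soldering} the resulting representative is in the same soldering class. The result is $2\,du\,dv+h_{ij}\,dy^idy^j$, which is \eqref{eq:rosen-from-lagrangian}.

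The main obstacle is this cancellation: matching the sign conventions of $A_{ij}$ and of the $v$-shift with \eqref{eq:contact-parabolic-action}. I would verify it first, in a parallel frame, by writing $J_i=P_i{}^a(u)e_a$ and checking $P'^{T}P'-P^{T}P''=(P^TP')'$ up to the symmetric correction.

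\emph{Third assertion (degeneration).} The realization is a diffeomorphism germ exactly when the matrix $(J_i(u))$ of values is invertible, that is, when $\det h_{ij}(u)\neq0$. By the first step this holds precisely off $\Lambda\cap\cH_{(x,p)}\ne0$. Conversely, at a point where the intersection is nonzero, some combination $c^iJ_i$ vanishes at $x$, so $h_{ij}c^ic^j=0$. The Rosen metric is then degenerate and the chart fails. This is the Rosen caustic.
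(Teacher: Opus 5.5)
Your proposal is correct and follows the paper's own three-step argument: the first and third assertions are exactly the paper's (\Cref{thm:incidence-evaluation} restricted to a complement of its kernel, and rank loss of the evaluation matrix when a nonzero member of $\Lambda$ vanishes at $x$), and your explicit Brinkmann-to-Rosen calculation simply unpacks the paper's one-line remark that vanishing Wronskians on $\Lambda$ are the integrability condition for the Rosen cross terms to vanish. One bookkeeping fix: with cross term $2S_{ij}y^i\,du\,dy^j$, the decomposition is $2\,du\,d\bigl(\tfrac12 S_{ij}y^iy^j\bigr)-S'_{ij}y^iy^j\,du^2$, which is twice what you wrote; this is what makes the $du^2$ terms cancel exactly and is consistent with your shift $v\mapsto v-\tfrac12 S_{ij}y^iy^j$ (likewise, the identity behind your parallel-frame check should read $(P^TP')'=P'^TP'+P^TP''$, with a plus sign).
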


\begin{proof}
The first assertion is \Cref{thm:incidence-evaluation} restricted to a
complement of its kernel.  The Wronskian vanishes on $\Lambda$, which is the
integrability condition for the Rosen cross terms to vanish.  The evaluation
matrix loses rank precisely when a nonzero member of $\Lambda$ vanishes at
$x$, equivalently when the stated intersection is nonzero.
\end{proof}

The soldered Penrose germ extends across this locus; the chosen parabolic
reduction has ceased to be transverse to the sky curve.

\subsection{What a Legendrian family supplies}

Let $L\subset\cN$ be a Legendrian submanifold of dimension $n$.

\begin{proposition}[Legendrian polarization]
\label{prop:legendrian-polarization}
The tangent bundle $TL\subset\cC|_L$ is a smooth Lagrangian subbundle.  Hence
$L$ determines a canonical section
\begin{equation}\label{eq:legendrian-lg-section}
 \lambda_L:L\longrightarrow\operatorname{LG}(\cC)|_L,
 \qquad \lambda_L(p)=T_pL.
\end{equation}
\end{proposition}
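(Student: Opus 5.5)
The plan is to use only the definition of a Legendrian submanifold together with the linear algebra of the conformal symplectic Levi form on $\cC$. By definition $L$ is an $n$-dimensional submanifold of the $(2n+1)$-dimensional contact manifold $\cN$ with $T_pL\subset\cC_p$ for every $p\in L$. Equivalently, $\theta|_L=0$ for any local contact form $\theta$ with $\ker\theta=\cC$. The claim is then that each $T_pL$ is Lagrangian in $(\cC_p,\omega)$, that these subspaces form a smooth subbundle, and that $\lambda_L$ is a smooth section of $\operatorname{LG}(\cC)|_L$.

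\emph{Isotropy.} Fix a local positive contact form $\theta$, which is a local section of $\cS$, and pull it back along the inclusion $\iota:L\hookrightarrow\cN$. Since $\iota^*\theta=0$, we also have $\iota^*d\theta=d\iota^*\theta=0$. Hence $d\theta_p(X,Y)=0$ for all $X,Y\in T_pL$. Now $d\theta_p|_{\cC_p}$ is the Levi form $\omega_\eta$ of \eqref{eq:reeb}, where $\eta=j^1_p\theta$. A different choice of $\theta$ multiplies the restriction of $d\theta_p$ to $\cC_p$ by a positive constant, because the term $df\wedge\theta$ vanishes on $\cC_p$. Isotropy is therefore independent of the scale and is a statement about the conformal symplectic structure on $\cC$.

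\emph{Lagrangian and smoothness.} Nondegeneracy of the Levi form on the $2n$-dimensional space $\cC_p$ bounds the dimension of any isotropic subspace by $n$. Since $\dim T_pL=n$, each $T_pL$ is Lagrangian. The map $p\mapsto T_pL$ is smooth because $TL$ is a smooth rank-$n$ subbundle of $T\cN|_L$ that happens to lie in $\cC|_L$. In a local frame of $TL$ it is given by smooth spanning vector fields, so it is smooth as a map into the Grassmann bundle $\operatorname{Gr}_n(\cC)|_L$. It takes values in the closed submanifold $\operatorname{LG}(\cC)|_L$, and this gives the smooth section $\lambda_L$ of \eqref{eq:legendrian-lg-section}.

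I expect no serious obstacle. The only point needing care is that $\operatorname{LG}(\cC)$ is well defined from the conformal class of the Levi form alone, with no choice of $\eta$, and the scale-independence remark in the isotropy step settles this.
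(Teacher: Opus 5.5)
Your proposal is correct and follows the paper's proof: $TL\subset\cC|_L$, the Levi form vanishes on $TL$, and $\dim T_pL=n$ is half the rank of $\cC$, so $T_pL$ is Lagrangian. The details you add are all valid and left implicit in the paper: the justification $\iota^*d\theta=d\iota^*\theta=0$ for isotropy, the independence of the choice of contact scale, and the smoothness of $\lambda_L$ as a section of the Grassmann bundle.
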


\begin{proof}
The Legendrian condition gives $TL\subset\cC|_L$.  The Levi form vanishes on
$TL$, and $TL$ has half the rank of $\cC$, so it is maximally isotropic and
hence Lagrangian.
\end{proof}

A section $\eta:L\to J^1\cS|_L$ supplies the remaining contact-scale datum.
Together they define the polarized Legendrian Penrose datum
\begin{equation}\label{eq:legendrian-penrose-datum}
 b_L=(\eta,\lambda_L):L\longrightarrow\mathcal B_R|_L.
\end{equation}
This is a coherent family of Rosen polarizations; at an incident point it gives
a Rosen chart only where $T_pL\cap\cH_q=0$.

Let
\begin{equation}\label{eq:legendrian-incidence}
 \cU_L=\pi^{-1}(L).
\end{equation}
This has dimension $n+1$.

\begin{proposition}[Legendrian incidence]\label{prop:legendrian-incidence}
Away from the locus where
$T_pL\cap\cH_{(x,p)}\ne0$, the map
$\rho:\cU_L\to M$ is an immersion onto a null hypersurface.  It cannot by itself
give a diffeomorphism onto an open spacetime neighborhood.
\end{proposition}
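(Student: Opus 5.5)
The plan is to compute $d\rho$ on $\cU_L$ through the incidence evaluation of \Cref{thm:incidence-evaluation}, and then to read off the image and its causal type.

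\textbf{Splitting the tangent space.} At $q=(x,p)\in\cU_L$ the tangent space splits, non-canonically, as
\[
T_q\cU_L\cong\mathscr K_q\oplus T_pL .
\]
The first summand is the fiber direction $\ker d\pi$. The second is obtained by lifting through any local section of $\pi$, for instance by fixing an affine parameter value along nearby rays. The key identity I would establish is
\[
d\rho(\widetilde X)\bmod\cK_q=\operatorname{ev}_q(X)
\]
for $X\in T_pL$ and any lift $\widetilde X$. The reason is that a curve in $L$ through $p$, together with incident points, is a variation of null geodesics, and the variation field at $x$ is the Jacobi representative of $X$ evaluated at $x$. Different lifts change this value only by a multiple of $k$.

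\textbf{Injectivity.} By \Cref{thm:incidence-evaluation}, $\operatorname{ev}_q$ restricted to $T_pL$ has kernel $T_pL\cap\cH_q$. Under the hypothesis this kernel is zero. So $\operatorname{ev}_q$ maps $T_pL$ injectively into $T_xM/\cK_q$. Since $T_pL\subset\cC_p$, its image lies in $\cE_q=\cK_q^\perp/\cK_q$, and by dimension count the image is all of $\cE_q$. The fiber direction maps to $\cK_q=\langle k\rangle$, which is nonzero.

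Now suppose $d\rho(a\,\partial_{\mathscr K}+\widetilde X)=0$. Reducing mod $\cK_q$ gives $\operatorname{ev}_q(X)=0$, hence $X=0$, and then $a k=0$ forces $a=0$. Thus $d\rho$ is injective on the $(n+1)$-dimensional space $T_q\cU_L$, and $\rho$ is an immersion.

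\textbf{Image is null.} The image of $d\rho_q$ is the preimage of $\cE_q$ in $T_xM$, namely $\cK_q^\perp=k^\perp$. This is a degenerate hyperplane whose radical is $\langle k\rangle$. The immersed image is therefore a null hypersurface, ruled by the null geodesics of $L$.

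\textbf{Last clause.} This is just dimension: $\dim\cU_L=n+1<n+2=\dim M$, so $\rho|_{\cU_L}$ cannot be a diffeomorphism onto an open set. To reach an open neighborhood one needs an extra transverse weight-two parameter, namely a one-parameter family of Legendrians. This is presumably what the paper supplies next.

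\textbf{Main obstacle.} The step I expect to be hardest is justifying the identity $d\rho(\widetilde X)\equiv\operatorname{ev}_q(X)$. The point is that the Jacobi representative attached to a tangent vector of $\cN$ is only defined modulo $(a+bu)k$. I would handle this by noting that $\operatorname{ev}_q$ was constructed precisely modulo $\cK_q$. The ambiguity in the lift and in the representative both land in $\cK_q$, and the fiber direction already accounts for $\cK_q$ in the image.
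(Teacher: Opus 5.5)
Your proposal is correct and follows the paper's proof: the fiber direction maps to $\cK$, the $T_pL$ directions map by incidence evaluation injectively onto $\cE_q$ because the kernel $T_pL\cap\cH_{(x,p)}$ vanishes, and so the image of $d\rho$ is the degenerate hyperplane $\cK^\perp$ with null normal $\cK$. You add detail that the paper leaves implicit, namely why $d\rho$ of a lift agrees with evaluation modulo $\cK_q$, and the dimension count $n+1<n+2$ that gives the final clause.
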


\begin{proof}
The tangent along the $\pi$-fiber maps to the null generator $\cK$.  The
remaining tangent directions are evaluated Jacobi fields from $T_pL$; their
evaluation kernel is $T_pL\cap\cH_{(x,p)}$.  At a regular
point they give $n$ independent screen directions, so the image has dimension
$n+1$ and null normal $\cK$.
\end{proof}

The missing direction is supplied by the contact line.  Choose a local contact
form $\theta$ representing a jet $\eta\in J^1\cS$, let
$\varphi_v^{R_\theta}$ be its Reeb flow, and put
\begin{equation}\label{eq:reeb-thickening}
 L_v=\varphi_v^{R_\theta}(L).
\end{equation}
The incidence of the one-parameter family $\{L_v\}$ has dimension $n+2$.

\begin{proposition}[Reeb thickening]\label{prop:reeb-thickening}
At a point where $T_pL$ is transverse to the sky, the incidence map of
$\{L_v\}$ is a local diffeomorphism onto an open subset of $M$.  It gives a
representative of the soldering class in
\Cref{thm:canonical-soldering}.  Changing the extension of the same first
contact-scale jet changes this representative by higher-order gauge.
\end{proposition}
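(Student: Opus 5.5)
The plan is to compute the differential of the incidence map of the family $\{L_v\}$ at a regular point and show that its image is all of $T_xM$, graded by the null flag. Write $\widetilde{\cU}_L=\{(x,p,v):p\in L_v,\ x\in\gamma_p\}$, which has dimension $n+2$ with local coordinates $(u,\ell,v)$, where $u$ is an affine parameter on the ray, $\ell\in L$, and $p=\varphi_v^{R_\theta}(\ell)$. The incidence map is $\widetilde\rho(u,\ell,v)=\rho(x_u,\varphi_v(\ell))$.

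At $v=0$ its differential has three pieces:
\begin{itemize}
\item the $u$-direction maps to $\cK$;
\item the $\ell$-directions map through $\operatorname{ev}_q$ restricted to $T_pL$;
\item the $v$-direction maps through $\operatorname{ev}_q(R_\theta)$.
\end{itemize}
Modulo $\cK_q$ we use the exact sequence \eqref{eq:evaluation-exact} and its graded form \eqref{eq:evaluation-graded}. Since $T_pL\subset\cC_p$ and $T_pL\cap\cH_q=0$, \Cref{prop:rosen-locus} and \Cref{prop:legendrian-incidence} show that $\operatorname{ev}_q(T_pL)$ maps isomorphically onto $\cE_q=\cK_q^\perp/\cK_q$. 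Since $R_\theta\notin\cC_p$, its evaluation has nonzero image in $\cL_q=T_xM/\cK_q^\perp$, by the second isomorphism in \eqref{eq:evaluation-graded}.

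A filtered map that is surjective on each graded piece is surjective. The differential is therefore onto $T_xM$, and since the dimensions agree ($n+2$) it is an isomorphism. The inverse function theorem then gives a local diffeomorphism. The regular locus is open, so this holds on a neighborhood.

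Next we identify this chart as a weighted realization carrying the principal datum $\operatorname{Sol}_0$. We choose Darboux coordinates \eqref{eq:contact-darboux} in which $L=\{\xi=0,v=0\}$. This is possible since any Legendrian transverse to the sky is locally a graph, and we take $R_\theta=\partial_v$ at $p$. Then $\{L_v\}$ consists of the slices $\{\xi=0,\ v=\text{const}\}$ to first order, and the chart $(u,x,v)$ on spacetime is the composite. The inverse of the incidence map, read in $(x,v)$, assigns weight one to the $L$-coordinates and weight two to the Reeb parameter. Its linearization along the ray is the map $\operatorname{Sol}_{\mathrm{gr}}$ of \eqref{eq:graded-soldering}, and the $v$-scale is fixed by $\theta_p$ and $R_\eta$, so $a(u)=1$. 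The quadratic $v$-part is the one determined by the Lagrangian complement $T_pL$, that is, the Rosen reduction \eqref{eq:rosen-reduction} inside $P$. So the chart realizes $\operatorname{Sol}_0$ with its Levi reduction, and \Cref{thm:canonical-soldering}(d) places it in the torsor $[\operatorname{Sol}]$.

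Finally, we check gauge dependence. Suppose another contact form $\theta'$ has the same jet $\eta$, and consider the corresponding thickening of the same $L$. The two Reeb fields agree at $p$, and by \Cref{thm:grading-from-contact-jet} and \Cref{prop:first-jet-principle} the transition has identity Heisenberg principal part. The transition $F$ between the two incidence charts therefore has identity weighted principal part, so $F\in\Ghigh$.

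The main obstacle is the uniformity of this last step along the whole ray. The Reeb flows agree to first order only at $p$, but the charts are compared at all incident points $x\in\gamma$. I would handle this by noting that $\cH_q$ moves with $u$ while $\theta_p$, $R_\eta$ are constant along the ray, so the principal-part comparison is fiberwise over $\widehat{\cU}$. The $u$-dependence then enters only through coefficients, as allowed in \eqref{eq:full-local-transitions}. It remains to verify that the discrepancy of the Reeb flows at order $v^2$ and $vx$ carries weight at least $3$. This follows from the flows agreeing to first order at $p$: the difference is $O(v)\cdot O_w(1)$ in the Reeb component.
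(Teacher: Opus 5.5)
Your proposal is correct and follows the paper's own argument. The $u$-direction, the Legendrian directions (evaluated isomorphically onto $\cE_q$ by transversality to the sky) and the Reeb class (evaluated onto $\cL_q$) give full rank, and the inverse function theorem gives the chart; equal first jets then give a transition with identity weighted principal part, hence one in $\Ghigh$. Your added estimate fills in the step the paper only asserts: the two Reeb flows differ by $v\cdot O_w(1)=O_w(3)$, because $R_{\theta'}-R_\theta$ vanishes at $p$ and every incident point of $\gamma$ sees the same data in $\cN$ (and in a Darboux chart for $\theta$ itself the slices $\{\xi=0,\ v=\mathrm{const}\}$ are exact, not merely first-order).
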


\begin{proof}
By \Cref{prop:legendrian-incidence}, the $u$-direction and the $n$ Legendrian
directions have rank $n+1$.  The class of the Reeb direction in
$T_p\cN/\cC_p$ evaluates, by \Cref{thm:incidence-evaluation}, to the nonzero
weight-two quotient $\cL_q$.  The tangent space of the null hypersurface is
$\cK_q^\perp$, so this evaluation is automatically transverse and adds the
final rank.  The inverse function theorem gives the local realization.
Two contact forms with the same first jet have the same weighted principal
data, so their transition lies in $\Ghigh$ by
\eqref{eq:high-gauge}.
\end{proof}

The value of the scale normalizes the affine null direction, its first
derivative produces the infinitesimal Reeb thickening, and higher jets vary the
representative of the soldering class.  The construction is therefore naturally
based on $J^1\cS$.

\section{Conformal naturality and projective refinement}
\label{sec:conformal-naturality}

The incidence part of the construction is conformal, while the metric
representatives retain scale dependence.  If $\widehat g=\Omega^2g$, the
unparametrized null geodesics and the incidence correspondence are unchanged.
The contact distribution $\cC\subset T\cN$,
the skies and their Lagrangian tangent spaces $\cH_q$, and the evaluation
filtrations in \Cref{thm:incidence-evaluation} are therefore conformally
natural.  In particular, the linear soldering
$\operatorname{Sol}_{\mathrm{gr}}$ is attached to the conformal null geometry.

\subsection{Scale-dependent metric data}

If $k$ is affine for $g$, then,
up to a constant factor, $\widehat k=\Omega^{-2}k$ is affine for
$\widehat g$.  The screen metric, the representative symplectic form on
$\cC$, and the curvature profile must be transformed accordingly.  A contact
scale records this normalization.  If it is forgotten, the
structure group of the weight-one frame bundle enlarges from
$\Sp(2n,\mathbb R)$ to the conformal symplectic group
$\operatorname{CSp}(2n,\mathbb R)$.

Under conformal rescaling, the metric representative changes together with the
contact scale and affine normalization.  After the scale is forgotten, the
longitudinal parameter geometry is projective.  Its trace normalization comes
from the sky curve.

\subsection{The Schwarzian of the sky curve}
\label{sec:sky-schwarzian}

Fix $p=[\gamma]\in\cN$.  Incidence defines the unparametrized curve
\begin{equation}\label{eq:sky-curve}
 H_\gamma:\gamma\longrightarrow\operatorname{LG}(\cC_p),
 \qquad H_\gamma(x)=\cH_{(x,p)}.
\end{equation}
Under the identification \eqref{eq:contact-transverse-jacobi}, this is the
curve of transverse Jacobi fields vanishing at $x$.  It is a smooth positive
regular curve in the Lagrangian Grassmannian; positivity is the screen metric
written as its tangent form \cite{HollandSparlingIII}.  At a Rosen caustic a
fixed complement ceases to give a graph chart for \eqref{eq:sky-curve}, while
the sky curve remains smooth.

Choose a local parameter $u$ on $\gamma$.  Write
$\mathfrak S_u(H_\gamma)$ for the Grassmannian Schwarzian of the resulting
parameterized curve, as defined in \cite{HollandSparlingIV}.  It is an
endomorphism of the moving Lagrangian $H_\gamma(u)$.  If $u=\phi(s)$, its chain
rule is
\begin{equation}\label{eq:grassmannian-schwarzian-chain}
 \mathfrak S_s(H_\gamma\circ\phi)
 =\bigl(\phi'(s)\bigr)^2
   \mathfrak S_u(H_\gamma)\bigl(\phi(s)\bigr)
  +\{\phi,s\}I,
\end{equation}
where
\begin{equation}\label{eq:ordinary-schwarzian}
 \{\phi,s\}
 =\frac{\phi'''(s)}{\phi'(s)}
  -\frac32\left(\frac{\phi''(s)}{\phi'(s)}\right)^2
\end{equation}
is the ordinary Schwarzian derivative.  The inhomogeneous term in
\eqref{eq:grassmannian-schwarzian-chain} is a scalar endomorphism.

Define
\begin{equation}\label{eq:projective-connection}
 q_u=\frac1n\operatorname{tr}\mathfrak S_u(H_\gamma),
 \qquad
 \mathring{\mathfrak S}_u
 =\mathfrak S_u(H_\gamma)-q_u I.
\end{equation}
Taking the trace and trace-free part of
\eqref{eq:grassmannian-schwarzian-chain} gives
\begin{align}
 q_s
 &=\bigl(\phi'\bigr)^2q_u\circ\phi+\{\phi,s\},
 \label{eq:projective-connection-law}\\
 \mathring{\mathfrak S}_s
 &=\bigl(\phi'\bigr)^2
   \mathring{\mathfrak S}_u\circ\phi.
 \label{eq:tracefree-schwarzian-law}
\end{align}
Thus $q$ transforms as a projective connection, while
$\mathring{\mathfrak S}$ is a trace-free endomorphism-valued quadratic
differential.

For comparison with the ambient curvature, suppose that $u$ is affine for
$g$, put $k=d/du$, and identify $H_\gamma(u)$ with $E_u$ by derivative
evaluation, $J\mapsto\nabla_kJ(u)$.  The curvature calculation of
\cite{HollandSparlingIV},
applied to the transverse Jacobi equation \eqref{eq:jacobi}, gives
\begin{equation}\label{eq:sky-schwarzian-curvature}
 \mathfrak S_u(H_\gamma)=2\mathcal R_\gamma,
 \qquad
 q_u=\frac{2}{n}\operatorname{Ric}_g(k,k).
\end{equation}
If $\mathsf P^g$ denotes the Schouten tensor of $g$, then
$q_u=2\mathsf P^g(k,k)$, since $\dim M=n+2$ and $k$ is null.  This is the
classical conformal projective connection along a null geodesic, expressed
intrinsically through its sky curve.
Equivalently,
\begin{equation}\label{eq:weyl-schwarzian}
 \mathring{\mathfrak S}_u
 =2\left(\mathcal R_\gamma
 -\frac1n\operatorname{Ric}_g(k,k)I_E\right),
\end{equation}
which is twice the Weyl tidal endomorphism.  In the Brinkmann convention of
\eqref{eq:profile}, this is the negative of twice the trace-free part of the
profile $A(u)$.

\begin{theorem}[Projective refinement of a null geodesic]
\label{thm:projective-refinement}
Every unparametrized null geodesic $p=[\gamma]\in\cN$ carries a canonical local
projective structure, natural under local conformal diffeomorphisms.  Its
projective parameters are the local parameters $s$ for which
\begin{equation}\label{eq:projective-parameter}
 \operatorname{tr}\mathfrak S_s(H_\gamma)=0.
\end{equation}
Any two such parameters are related by a fractional-linear transformation.
In a projective parameter the Grassmannian Schwarzian is trace-free and
represents the Weyl tidal profile of the Penrose germ.  These projective
structures vary smoothly with $p$ on every smooth local ray space.
\end{theorem}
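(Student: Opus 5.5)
The plan is to build the projective structure directly from the transformation law \eqref{eq:projective-connection-law}. The sky curve $H_\gamma$ of \eqref{eq:sky-curve} is defined from incidence alone. By the discussion at the start of \Cref{sec:conformal-naturality}, it therefore depends only on the conformal class of $g$ and on the unparametrized ray. The Grassmannian Schwarzian $\mathfrak S_u(H_\gamma)$ of \cite{HollandSparlingIV} is defined from the parametrized curve in $\operatorname{LG}(\cC_p)$ together with the linear structure of $\cC_p$. Changing the contact scale only multiplies the Levi form by a constant, which preserves $\operatorname{LG}(\cC_p)$. Consequently $q_u$ depends only on $(p,u)$ and is natural under local conformal diffeomorphisms, since these induce contactomorphisms of $\cN$ that carry skies to skies.

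Next, I will show that projective parameters exist and are unique up to fractional-linear changes. Fix any local parameter $u$. By \eqref{eq:projective-connection-law}, a reparametrization $u=\phi(s)$ is projective exactly when
\[
\{\phi,s\}=-(\phi')^2\,q_u\circ\phi .
\]
Inverting, $\psi=\phi^{-1}$ satisfies $\{\psi,u\}=q_u$ by the cocycle identity for the Schwarzian. The standard linearization applies: if $y_1,y_2$ are independent solutions of $y''+\tfrac12 q_u\,y=0$, then $\psi=y_1/y_2$ solves $\{\psi,u\}=q_u$ wherever $y_2\neq0$, and $\psi'\neq0$ because the Wronskian is nonzero. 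This gives local existence. For uniqueness, two projective parameters $s,\tilde s$ satisfy $\{s,\tilde s\}=0$ by \eqref{eq:projective-connection-law} applied with $q=0$ on both sides. The solutions of $\{f,t\}=0$ are exactly the Möbius maps, so $\tilde s$ is a fractional-linear function of $s$. This is the local projective structure, and its naturality is inherited from the naturality of $q$.

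The statement about the Weyl tidal profile follows from \eqref{eq:tracefree-schwarzian-law} and \eqref{eq:weyl-schwarzian}. In a projective parameter the Schwarzian equals its trace-free part, $\mathfrak S_s=\mathring{\mathfrak S}_s$. This equals $(\phi')^2\mathring{\mathfrak S}_u\circ\phi$, where $u$ is $g$-affine, and so it equals $2(\phi')^2$ times the Weyl tidal endomorphism, that is, $-2(\phi')^2$ times the trace-free part of $A(u)$ in \eqref{eq:profile}. The quadratic-differential weight $(\phi')^2$ is precisely the conformal rescaling of the affine tangent, so this is the Weyl profile of the Penrose germ read in the projective gauge.

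Smooth dependence on $p$ comes from smooth dependence of ODE solutions on parameters. On a smooth local ray space, choose a smooth family of local parameters, for example affine parameters of one representative metric with a smooth choice of origin and scale. Then $q_u$ is smooth jointly in $(p,u)$ by \eqref{eq:sky-schwarzian-curvature}, and I can take solutions $y_1,y_2$ with initial data independent of $p$. The main obstacle is the naturality step. It must be justified that the Grassmannian Schwarzian, as defined in \cite{HollandSparlingIV}, depends only on the curve in $\operatorname{LG}(\cC_p)$ and not on the chosen identification of $H_\gamma(u)$ with $E_u$ or on a symplectic representative. I would first try to verify this from the $\Sp$- and $\operatorname{CSp}$-equivariance of its definition via cross-ratio or osculating-curve constructions.
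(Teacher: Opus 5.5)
Your proposal is correct and follows essentially the same route as the paper. You use \eqref{eq:projective-connection-law} to get existence by solving the zero-Schwarzian equation (which you make explicit through $y''+\tfrac12 q_u y=0$), uniqueness by $\{s,\tilde s\}=0$ and M\"obius maps, naturality from the conformally defined sky curve, and the Weyl identification from \eqref{eq:tracefree-schwarzian-law} and \eqref{eq:weyl-schwarzian}. The invariance of the Grassmannian Schwarzian that you flag as the main obstacle is exactly what the paper takes as given from \cite{HollandSparlingIV}, so your argument is at least as complete as the paper's.
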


\begin{proof}
Equation \eqref{eq:projective-connection-law} makes the collection of local
functions $q_u$ a projective connection on the one-manifold $\gamma$.
Projective parameters exist locally: beginning with any $u$, solve the
third-order Schwarzian equation obtained by setting the right-hand side of
\eqref{eq:projective-connection-law} equal to zero.  If $s$ and $t$ are two
such parameters, applying the same equation to their transition map gives
$\{s,t\}=0$.  The local solutions of the zero-Schwarzian equation are exactly
\begin{equation}\label{eq:fractional-linear-parameters}
 s\longmapsto\frac{as+b}{cs+d},
 \qquad ad-bc\ne0.
\end{equation}
The normalized parameters therefore form a projective atlas.

The unparameterized curve \eqref{eq:sky-curve} is constructed solely from the
conformal contact distribution and the incidence skies.  Its projective
connection, being defined intrinsically by the trace of its Schwarzian, is
therefore natural under conformal diffeomorphisms.  Finally,
\eqref{eq:tracefree-schwarzian-law} and \eqref{eq:weyl-schwarzian} identify the
remaining quadratic differential with the Weyl tidal endomorphism.  Smooth
dependence follows from the smooth sky family and the local Schwarzian formula.
\end{proof}

Without an orientation the transition group in
\eqref{eq:fractional-linear-parameters} is
$\operatorname{PGL}(2,\mathbb R)$.  On the oriented component it is
$\operatorname{PGL}^+(2,\mathbb R)\cong\operatorname{PSL}(2,\mathbb R)$.

\begin{remark}[Agreement with plane-wave conformal transformations]
\label{rem:projective-conformal-check}
The preceding intrinsic construction reproduces the coordinate law proved in
\cite{HollandSparlingII}.  In the sign convention used there, the potential is
\[
 p=\mathcal R_\gamma=-A,
\]
where $A$ is the Brinkmann profile in \eqref{eq:profile}.  If $dU=L^2du$, it
transforms by
\begin{equation}\label{eq:brinkmann-conformal-profile}
 \widehat p(U)=L^{-4}p(u(U))-\frac{L''(U)}{L(U)}I.
\end{equation}
Here primes on $L$ denote $U$-derivatives.  Since
$\{u,U\}=-2L''/L$, equation
\eqref{eq:brinkmann-conformal-profile} becomes
\begin{equation*}
 2\widehat p(U)=L^{-4}\,2p(u(U))+\{u,U\}I,
\end{equation*}
the curvature form of \eqref{eq:grassmannian-schwarzian-chain}.  Passing to
trace-free parts gives the same law in the contact, Jacobi, and Brinkmann
descriptions.
\end{remark}

\begin{remark}[Tangential Jacobi fields]
The rank-two space of tangential Jacobi fields
\begin{equation*}
 \mathcal T_\gamma=\{(a+bu)k:a,b\in\mathbb R\}
\end{equation*}
is independent of the chosen affine parameter and gives the familiar quotient
$T_p\cN\cong\mathcal J_\gamma^{\mathrm{full}}/\mathcal T_\gamma$.  This
quotient removes affine origins and scales.  The larger projective atlas comes
from the trace of the sky Schwarzian.
\end{remark}

\subsection{Relation to the contact-scale jet}

The projective refinement complements the base $J^1\cS$ in
\Cref{thm:canonical-soldering}.  It supplies a class of longitudinal
parameters; a contact-scale jet selects the affine normalization and Reeb
representative used in metric-level soldering.  Forgetting that scale leaves
the projective structure along the ray, the conformal symplectic structure on
$\cC$, and the projectively weighted Weyl tidal profile
\eqref{eq:tracefree-schwarzian-law}.

The projective class supplies the longitudinal parameter geometry needed to
formulate the Hamilton--Jacobi constructions of
\cite{HollandSparlingIV,HollandSparlingV} over ray space.  Their variation over
the soldered bundle will be considered elsewhere.

\bibliographystyle{alpha}
\bibliography{pw6-soldered-penrose-limits-proof-corrections-2026-08-09}

\end{document}